\documentclass[12pt]{amsart}
\usepackage{xcolor}
\usepackage{soul, url}
\usepackage{amsmath, amsfonts, amssymb, amsthm}
\usepackage{hyperref, comment}
\usepackage{graphicx} 
\usepackage[margin=1.5in]{geometry}
\usepackage{amsmath, amsthm, latexsym, amssymb, amsfonts, epsf, xcolor, hyperref, mathrsfs}

\usepackage{booktabs}

\usepackage[nameinlink]{cleveref}
\usepackage{longtable}
\usepackage[table]{xcolor}
\newcommand{\F}{\mathbb{F}}
\newcommand{\N}{\mathbb{N}}
\newcommand{\Z}{\mathbb{Z}}

\newcommand{\x}{\mathbf{x}}
\newcommand{\y}{\mathbf{y}}
\newcommand{\w}{\mathbf{w}}
\newcommand{\uv}{\mathbf{u}}
\newcommand{\vv}{\mathbf{v}}
\newcommand{\cv}{\mathbf{c}}
\DeclareMathOperator{\supp}{supp}
\newcommand{\rmv}[1]{}

\usepackage{xcolor}

\usepackage{url}
\usepackage{wrapfig}
\usepackage{float}
\usepackage{geometry}
\usepackage{amssymb}
\usepackage{amsfonts}
\usepackage{amsthm}
\usepackage{mathtools}
\usepackage{enumerate}
\usepackage{verbatim}
\usepackage{physics}
\usepackage{euscript}
\usepackage{tabularx}
\usepackage[ruled,vlined,linesnumbered]{algorithm2e}
\usepackage[utf8]{inputenc}

\numberwithin{equation}{section}

\usepackage{enumitem}
\usepackage{tikz,pgfplots}
\usepgfplotslibrary{groupplots}
\pgfplotsset{compat=1.18}

\newtheorem{theorem}{Theorem}[section]

\newtheorem{proposition}[theorem]{Proposition}
\newtheorem{corollary}[theorem]{Corollary}

\newtheorem{example}[theorem]{Example}

\theoremstyle{definition}
\newtheorem{definition}[theorem]{Definition} 
\newtheorem{remark}[theorem]{Remark}

\title[COFI-DQI: Curve-based Optimal Function Intersection via DQI]{COFI-DQI: Curve-based Optimal Function Intersection via Decoded Quantum Interferometry}

\author[G. L. Matthews]{Gretchen L. Matthews}
\address[Gretchen L. Matthews]{Department of Mathematics\\ Virginia Tech\\ 
Blacksburg, VA, USA}
\email{gmatthews@vt.edu}

\author[J. Shapiro]{Julia Shapiro}
\address[Julia Shapiro]{Department of Mathematics\\ Virginia Tech\\ 
Blacksburg, VA, USA}
\email{juliams22@vt.edu}

\thanks{Gretchen L. Matthews was partially supported by NSF grants DMS-2502705 and CNS-2413218 and the Commonwealth Cyber Initiative. Julia Shapiro is supported by the Department of War Cyber Service Academy Scholarship}

\keywords{}

\begin{document}

\begin{abstract} In 2025, Jordan et al. introduced Decoded Quantum Interferometry (DQI), a quantum algorithm for combinatorial optimization based on decoding. 
They considered Reed--Solomon decoding and its associated optimization problem, called Optimal Polynomial Intersection (OPI), which may be viewed as a polynomial regression problem over a finite field. DQI exhibits provable speedups on certain problem instances and establishes a connection between decoding problems and optimization tasks. Leveraging the well-understood dual structure and decoding theory of algebraic geometry codes from other curve families, we introduce COFI: Curve-based Optimal Function Intersection. By considering two-point Hermitian codes, Suzuki codes, and extended norm--trace codes, we broaden the range of algebraic geometry codes used in DQI and identify families that offer further improvements over one-point Hermitian codes in the Hermitian Optimal Polynomial Intersection framework considered by Jordan and Gu. Depending on the family and parameter regime, these curves can reduce quantum resource requirements or increase the number of constraints that can be considered.  
\end{abstract}

\maketitle

\section{Introduction}
Given a linear system of equations, the Maximum Linear Satisfiability (max-LINSAT) problem is  to find an assignment to the variables that satisfies the maximum possible number of equations. Max-LINSAT is a maximum feasible subsystem problem, as are robust regression and consensus maximization, where the objective is to satisfy as many constraints or data points as possible despite noise, inconsistencies, or outliers. Such formulations arise in operations research, machine learning, computer vision, and signal processing, where optimization is often driven by maximizing agreement with observed data. In this paper, we study max-LINSAT over the finite field \(\F_q\) and the speedups offered by Decoded Quantum Interferometry on certain structured instances.

Consider a system of $n$ linear equations  
\begin{equation} \label{eq:system}
\begin{array}{ccc}
b_{11}x_1+\dots+b_{1k}x_k&=&c_1\\&\vdots & \\
b_{n1}x_1+\dots+b_{nk}x_k&=&c_n
\end{array}
\end{equation}
in $k$ unknowns $x_1, \dots, x_k$ over $\F_q$; here, $b_{ij}, c_i \in \F_q$ for all $i \in [n]:=\{ 1, \dots, n\}$ and $j\in [k]$. The max-LINSAT problem over $\F_q$ is a generalized linear constraint satisfaction problem that seeks to maximize the number of equations satisfied given $r$ acceptable values $c_{i1}, \dots, c_{ir} \in \F_q$ for each $i \in [n]$.  Max‑LINSAT is a finite‑field specialization of the maximum feasible subsystem problem introduced by Amaldi and Kann  \cite{AMALDI1995181}. Using PCP techniques, Hastad later showed that even for linear equations modulo $2$, meaning $q=2$, no polynomial‑time algorithm can approximate the optimum beyond the trivial random‑assignment baseline, under the standard complexity assumptions stated in \cite{hastad2001optimal}. A uniformly random assignment $\x \sim \operatorname{Unif}(\F_q^k)$ satisfies 
\(b_{i1}x_1+\dots+b_{ik}x_k=c_i\)
for a particular $i \in [n]$ with probability $\frac{r}{q}$, for each $i \in [n]$, assuming that each row $(b_{i1},\ldots,b_{ik})$ is nonzero and $\mid \{ c_{i1}, \dots, c_{ir} \}\mid=r$. 
By linearity of expectation, the expected number of satisfied constraints for this uniformly random assignment is $\frac{rn}{q}$. 

Given the computational difficulty of max-LINSAT and related constraint satisfaction problems, identifying structured instances for which improved optimization guarantees are possible is of considerable interest. Recent work has shown that algebraic structure can be exploited in a quantum setting, leading to new approaches that connect optimization problems to decoding problems from coding theory.
In particular, a major advance was Decoded Quantum Interferometry (DQI), introduced by Jordan, Shutty, Wootters, Zalcman, 
Schmidhuber, King,
Isakov, Khattar, and Babbush \cite{jordan2025dqi_nature}; see also \cite{Jordan_2025_extended}. DQI is a quantum algorithm that achieves superpolynomial speedups on certain structured optimization problems
by reducing them to decoding problems for related classical error-correcting codes. 
The algebraic structure underlying some optimization tasks can be leveraged,  allowing
quantum interference to amplify solutions that correspond to codewords
nearest to a target point. The recent exposition by Google Quantum AI \cite{JordanShutty2025QuantumToolkit} describes DQI as a quantum link between optimization and decoding, the process of determining a codeword from a perturbed version, in coding
theory. We note that when $q=2$, the problem is known as max-XORSAT.  

DQI uses a quantum Fourier transform to convert a linear optimization problem, such as max-LINSAT, into a structured decoding task. At a high level, a quantum algorithm first prepares a superposition encoding an objective function, which in the case of max-LINSAT tracks the number of satisfied constraints. Next, it interferes computational paths via quantum Fourier techniques.  Then it calls a decoder as an oracle inside the quantum routine. A key feature of DQI is a reduction that transforms a max-LINSAT instance into a syndrome-decoding problem for the dual of a related linear code. The authors demonstrate that for certain algebraically structured families, including those related to Reed--Solomon codes, the reduction yields an efficiently decodable code and DQI achieves provable speedups. This breakthrough work was followed by Jordan and Gu \cite{Jordan_Gu_AG_DQI_25} in the algebraic setting, who showed that Hermitian codes require approximately $\frac{1}{3}$ fewer qubits per field element than Reed--Solomon codes of the same length for the quantum implementations considered there.

In this paper, we leverage codes from algebraic curves of positive genus, which enable handling a larger number of constraints for a fixed field size compared to previous constructions. A key advantage of the codes we employ is that their duals remain within the same geometric family, making them particularly well-suited for the DQI framework. Moreover, efficient general algebraic geometry (AG) code decoding algorithms can be applied to the dual codes required in our setting. Efficient decoding algorithms are known for broad classes of algebraic geometry codes. For one-point AG codes, there are syndrome-based methods generalizing the Berlekamp--Massey algorithm \cite{SAKATA1990207,476248} and the Feng--Rao algorithm \cite{179340} that provide unique decoding. Later extensions provide efficient unique decoding algorithms for multipoint and dual multipoint AG codes \cite{6744599,6762981,8294201}. 
List decoding was subsequently developed by Guruswami and Sudan \cite{Guruswami1998ImprovedDO}. More recently, efficient interpolation-based list-decoding algorithms have been developed for general algebraic geometry codes, including fast Guruswami--Sudan-style algorithms and subsequent improvements in both computational complexity and decoding radius \cite{beelen2024list, beelen2025faster,Beelen2022FastDO}.
By harnessing families of curves with many rational points, we obtain improvements in DQI relative to Reed--Solomon codes of the same length in terms of qubit requirements per field element. Some also improve upon the Hermitian-based construction. 

\subsection*{Contributions.} Our primary contributions are the following.

We introduce COFI, a curve-based generalization of OPI and HOPI. This framework expands the class of constraint-satisfaction problems to which DQI may be applied. COFI builds on Hermitian Optimal Polynomial Intersection (HOPI) \cite{Jordan_Gu_AG_DQI_25} by extending the setting to finding rational functions on a projective curve $\mathcal X$ over $\F_q$ that maximize agreement at points on $\mathcal X$. For some curve families, these rational functions can be expressed as multivariate polynomials. More generally, COFI allows optimization over functions in a Riemann-Roch space that need not admit a polynomial representation. Hence, we make use of Riemann-Roch spaces of divisors of curves over finite fields.   

    We derive DQI satisfaction-fraction formulas for AG codes arising from general curves. Because the number of $\F_q$-rational evaluation points on $\mathcal X$ determines the number of constraints, we focus on explicit curves with many known rational points. 

  We analyze one-point Suzuki and extended norm--trace codes, as well as two-point Hermitian codes, within this framework. We show that Suzuki codes can achieve improved satisfaction fractions while requiring fewer qubits relative to Hermitian codes of comparable length. Under certain assumptions on the divisor degree, we further show that one-point Suzuki and extended norm--trace codes attain larger satisfaction fractions than the corresponding Hermitian codes. We establish that a class of two-point Hermitian codes yields a strictly larger expected DQI satisfaction fraction whenever the comparison lies in the nonsaturated regime; if both instances are saturated, both fractions equal \(1\).

\subsection*{Organization.}

This paper is organized as follows. Section~\ref{sec:prelim} provides the necessary background on linear codes and algebraic geometry codes. Section~\ref{sec:main} presents the general results on the DQI satisfaction fraction, together with examples of max-LINSAT formulations. 
Section~\ref{sec:families} 
specializes these results to particular code families, compares their DQI satisfaction fractions with those obtained from one-point Hermitian codes, and compares Suzuki-code-based DQI with Prange's algorithm.
Finally, Section~\ref{sec:conclusion} concludes the paper.

\section{Preliminaries} \label{sec:prelim}

\subsection{Linear Codes}
We begin with a review of linear codes and their key properties to be used later. Standard references for linear codes include \cite{
HuffmanPless2003, MacWilliamsSloane1977, PlessHuffman1998}. 

Let $q$ be a power of a prime, and $\F_q$ denote the finite field with $q$ elements. The set of $m \times n$ matrices with entries in $\F_q$ is denoted $\F_q^{m \times n}$. We write $\F_q^n$ for the vector space of length-$n$ vectors over $\mathbb F_q$; vectors will be regarded as rows or columns as dictated by context. Given a vector $\vv \in \F_q^n$ and $i \in [n]$, its $i${th} coordinate is denoted $v_i$, its support is $supp(\vv):=\left\{ i \in [n]: v_i \neq 0 \right\}$, and its (Hamming) weight is $\operatorname{wt}_H(\vv):=\mid supp(\vv) \mid$. At times, if the description of $\vv$ is complicated, we may write $(v)_i$ instead of $v_i$. The (Hamming) distance between vectors $\uv, \vv \in \F_q^n$  is \(d_H(\uv,\vv):=\mid \left\{ i \in [n]: u_i \neq v_i \right\} \mid\).  

An $[n,k,d]$ code $C$ over $\F_q$ is a $k$-dimensional subspace of $\F_q^n${ with minimum distance \(d:=\min \left\{ d_H(\cv,\cv'): \cv, \cv' \in C, \cv \neq \cv' \right\}\). Elements of $C$, called codewords, are vectors with $n$ coordinates each of which lies in $\F_q$. The Singleton Bound states that the parameters of an $[n,k,d]$ code satisfy $k+d\leq n+1$. The dual of an $[n,k,d]$ code $C$ over $\F_q$ is 
\[C^{\perp}:=\left\{ \vv \in \F_q^n: \vv \cdot \cv = 0 \ \forall \cv \in C \right\},\] where $\vv \cdot \cv:=\sum_{i\in [n]} v_i c_i$ denotes the Euclidean inner product over $\F_q$. Observe that $C^{\perp}$
is an $[n,n-k,d^{\perp}]$ code over $\F_q$ for some integer $d^{\perp}$. According to the Singleton Bound,  $d^{\perp} \leq k+1$. The elements of $C^{\perp}$ are referred to as dual codewords of $C$. 

\subsection{Algebraic Geometry Codes}

We use 
codes from algebraic geometry as we describe now. References for algebraic geometry codes include \cite{niederreiter2009algebraic, Stichtenoth}. 

Consider a smooth projective curve
$\mathcal X$ over $\F_q$ given by an equation $f(x,y)=0$, equivalently by $F(X,Y,Z)=0$ where $F$ is the homogenization of $f$. The points on $\mathcal X$ can be expressed as $(a:b:c):=\left\{ (\lambda a, \lambda b, \lambda c): \lambda \in \overline{\F_q} \setminus \{ 0 \} \right\}$ satisfying $F(a,b,c)=0$ where $\overline{\F_q}$ denotes the algebraic closure of $\F_q$ and at least one of $a$, $b$, and $c$ is nonzero.  
Every projective point has a representative with $c=1$ if it is affine and with $c=0$ if it lies at infinity.
Hence, each point on $\mathcal X$ can be represented as $(a:b:c)$ with $c \in \{ 0, 1 \}$. With this in mind, we sometimes write $P_{ab}$ to denote the point $(a:b:1)$ and refer to such a point as an affine point. Points $(a:b:0)$ are called points at infinity.  
The set of $\F_q$-rational points of $\mathcal X$, meaning those points $(a:b:c)$ with $a,b,c \in \F_q$,  is denoted by $\mathcal X(\F_q)$. 
The
 divisor group of $\mathcal X$, denoted $Div(\mathcal X)$,
is the
free abelian group on $\mathcal X(\overline{\F_q}),$
and its elements are called divisors. Hence, a divisor $A$ on $\mathcal X$ can be expressed as $A=\sum_{P \in \mathcal X(\overline{\F_q})} a_P P$ for some $a_P \in \Z$ with $a_P=0$ for all but finitely many $P \in \mathcal X(\overline{\F_q})$. Its support is $\supp(A):=\left\{ P \in \mathcal X(\overline{\F_q}): a_P \neq 0 \right\}$. If $\supp(A) \subseteq \mathcal X({\F_q})$, then the degree of $A$ is $\deg(A):=\sum_{P \in \mathcal X({\F_q})} a_P$. 
Let $\F_q(\mathcal X)$ denote the field of rational functions on $\mathcal X$. Each nonzero function $f \in \F_q(\mathcal X)$ has an associated divisor \((f) = \sum_{P \in \mathcal X(\overline{\F_q})} v_P(f) P\) where $v_P(f)>0$ (resp., $v_P(f)<0$) if $f$ has a zero at $P$ (resp., pole at $P$). The Riemann-Roch space of a divisor $A$ on $\mathcal X$ is \(\mathcal L(A):=\left\{ f \in \F_q(\mathcal X): (f) \geq -A \right\} \cup \{ 0 \}\), which is an $\F_q$-vector space of finite dimension, denoted by $\ell(A)$. A divisor $K$ on $\mathcal X$ is said to be canonical if it is the divisor of a differential on $\mathcal X$, equivalently, if $\deg K = 2g-2$ and $\ell(K)=g$. 
The Riemann-Roch Theorem indicates that \[\ell(A)=\deg A + 1 - g + \ell(K-A)\] where $K$ is a canonical divisor. Hence, if $\deg A \geq 2g-1$, then $\ell(K-A)=0$ and $\ell(A)=\deg A + 1 - g$. 

Throughout this paper, $\mathcal X$ denotes a smooth, projective curve over a finite field of genus $g$. When an affine or projective plane equation is given, $\mathcal X$ denotes a nonsingular projective model. 
Algebraic geometry codes were first defined by Goppa \cite{Goppa1981}. They were later used to construct sequences of codes exceeding the Gilbert-Varshamov Bound \cite{TsfasmanVladutZink1982}. Since then, they have been used in numerous settings due to their flexibility and provably good parameters linked to properties of the underlying curves. For more details, see \cite{Stichtenoth,TsfasmanVladut1991}.

\begin{definition}[Algebraic Geometry Code] \label{def:AGcode}
    Let $\mathcal X$ be a smooth projective curve over $\F_q$ of genus $g$ with at least one $\F_q$-rational point. Take
$\mathcal P=\{P_1,\dots,P_n\}\subseteq \mathcal X(\mathbb{F}_q)$  to be a set of $n$ distinct $\F_q$-rational points.
Let $G$ be a divisor on $\mathcal X$ with $\mathrm{supp}(G)\cap \mathcal P =\emptyset$. The associated algebraic geometry code is $C_{\mathcal L}(D,G):=ev(\mathcal L(G)) \subseteq \F_{q}^n$ where $D=P_1+\dots+P_n$ and 
\[
\begin{array}{lccc}
ev: & \mathcal{L}(G) & \rightarrow & \F_{q}^n \\
& f & \mapsto & \left(f(P_1), \dots, f(P_n) \right).
\end{array}
\]
\end{definition}

We note that there is a slight abuse of notation in Definition \ref{def:AGcode} (though it is standard in the literature), given that the divisor $D$ does not force an order on the points in its support while the definition of $C_{\mathcal L}(D,G)$ does. 
It can be shown that $C_{\mathcal L}(D, G)$ has length $n$; dimension $\ell(G)-\ell(G-D)$, which is the difference of dimensions of the Riemann-Roch spaces $\mathcal L(G)$ and $\mathcal L(G-D)$; and minimum distance $d$ at least $n- \deg G$. If $\deg G<n$, then the evaluation map is injective. Then the dimension of $C_{\mathcal L}(D, G)$
is $\ell(G)$ which is precisely $\deg G+1-g$ in the case that $2g-2 < \deg G$. Summarizing, $C_{\mathcal L}(D, G)$ is an $[n,\ell(G)-\ell(G-D), \geq n-\deg G]$ code over $\F_q$.
Given
a basis $\left\{f_1, \dots, f_k \right\}$ for $\mathcal L(G)$, $C_{\mathcal L}(D, G)$ may be described explicitly via a generator matrix
\[ 
\left[
\begin{array}{cccc}
   f_1(P_1)  & f_1(P_2) & \dots & f_1(P_n) \\
   f_2(P_1)  & f_2(P_2) & \dots & f_2(P_n) \\ \vdots & \vdots & & \vdots \\
f_k(P_1)  & f_k(P_2) & \dots & f_k(P_n)  
\end{array}
\right].
\]
If $\mid \supp G \mid=m$ and $\supp D =\mathcal X(\F_q) \setminus \supp G$, then $C_{\mathcal L}(D,G)$ is called an $m$-point code.  It is a well-known fact (see \cite[Proposition 2.2.10]{Stichtenoth}) that the dual of an algebraic geometry code on a curve $\mathcal X$ is an
algebraic geometry code from the same curve $\mathcal X$.  
In particular, 
the dual of  $C_{\mathcal L}(D,G)$ is
\[C_{\mathcal L}(D,G)^{\perp}=
 C_{\mathcal L}(D,D-G+K)
\] where $K=\left( \eta \right)$ is a canonical divisor such that $v_{P_i}(\eta)=-1$ and $\eta_{P_i}(1)=1$ for all $i \in [n]$. According to \cite[Corollary 2.2.11]{Stichtenoth}, if $\deg G > 2g-2$, then $C_{\mathcal L}(D,G)^{\perp}$ is an $[n,n+g-1-\deg G, d^{\perp}]$ code where 
the minimum distance $d^{\perp}$  
satisfies \[\deg G -2g+2 \leq d^{\perp} \leq \deg G -g+2.\]
Another way to define the dual is through differentials. It is well known that $C_{\mathcal{L}}(D,G)$ and $C_{\Omega}(D,G)$ are dual to each other, where $C_{\Omega}(D,G)$ is another algebraic geometry code based on Weil differentials (see \cite{Stichtenoth}). In particular, 
$$C_{\Omega}(D,G) = C_{\mathcal{L}}(D,G)^{\perp}.$$

\begin{definition}[Dual-code decoding radius] \label{def:dual_decoding_radius} Given a linear code $C$, the (unique) decoding radius of its dual $C^\perp$ is \[ \ell=\left\lfloor\frac{d^\perp-1}{2}\right\rfloor, \] 
where $d^{\perp}$ denotes the minimum distance of $C^{\perp}$.
\end{definition}

Note that $\ell$ is the maximum number of errors that can be corrected uniquely using bounded-distance decoding of $C^\perp$.

The most commonly studied algebraic geometry codes are the Reed--Solomon and Hermitian codes. 
A Reed--Solomon code is a one-point code $C_{\mathcal L}(D,\alpha P_{\infty})$ obtained by taking $\mathcal X=\mathbb{P}^1_{\F_q}$, the projective line over $\F_q$. While its parameters satisfy the Singleton Bound, its length is at most $q$ which is considered short relative to the alphabet size of $q$. In the context of DQI, as considered in \cite{Jordan_2025_extended, jordan2025dqi_nature}, this means that the number of constraints is (at most) $q$. It is worth noting that $\mathcal L(\alpha P_{\infty})=\F_q[x]_{ \leq \alpha}$, the set of univariate polynomials in the variable $x$ of degree at most $\alpha$. Hence, a basis for $\mathcal L(\alpha P_{\infty})$ consists of $x^i$, $i \in \{ 0, \dots, \alpha \}$.  
This setting gives rise to the Optimal Polynomial Intersection (OPI) DQI. 
In later work \cite{Jordan_Gu_AG_DQI_25}, DQI is considered using the Hermitian curve. The Hermitian curve $\mathcal H_q$ is defined by 
\[ x^{q+1}=y^q+y\]
over $\F_{q^2}$. It has $q^3+1$ $\F_{q^2}$-rational points, including a unique point at infinity $P_{\infty}$, making it a maximal curve. The one-point Hermitian code $C_{\mathcal L}(D,\alpha P_{\infty})$ has length $q^3$, allowing for up to $q^3$ constraints in (\ref{eq:system}). A basis for $\mathcal L \left( \alpha P_{\infty} \right)$ comprises monomials $x^iy^j$ in two variables satisfying $iq+j(q+1) \leq \alpha$. 

Jordan and Gu note the following advantages of Hermitian curves for DQI.
\begin{enumerate}
    \item smaller alphabet sizes compared with Reed--Solomon codes of the same length, which translates
to more efficient quantum implementations, since representing large field elements
with qubits is generally expensive;
\item insight into the broader applicability
of the framework, helping identify which structural properties of algebraic codes are
essential for quantum optimization advantages;
\item an interpretation of HOPI as an approximate list-recovery problem for Hermitian codes.
\end{enumerate}
In the next section, we extend these notions beyond the Hermitian curve to other curve families.

\subsection{Max-LINSAT and DQI} We first recall the definition of max-LINSAT.

\begin{definition}[Maximum Linear Satisfiability (max-LINSAT), satisfaction fraction] \label{def:maxLINSAT}
Let $A\in\F_q^{n\times k}$ and $F_1,\ldots,F_n\subseteq\F_q$.
The Maximum Linear Satisfiability problem is
\[ \max_{x\in \F_q^k} \left|\{i\in [n] : (Ax)_i\in F_i\}\right|. \]
The satisfaction fraction of an assignment $x \in \F_q^k$  is \( \frac{s}{n}\), where \[s:=\mid \left\{ i \in [n]: a_{i1} x_1 + \dots + a_{ik} x_k \in F_i \right\} \mid\] denotes the number of satisfied constraints $\left(Ax \right)_i \in F_i$.
\end{definition}
Notice that if each $F_i=\{b_i\}$, this reduces to finding an
assignment $\x\in\F_q^k$ that satisfies the maximum number of equations
in the linear system $Ax=b$.

Given the system (\ref{eq:system}), there is a natural connection to coding theory. 
Collect the coefficients of the constraints into a matrix $B:=[b_{ij}] \in \F_q^{n \times k}$. Notice that we may define \( C=\{B\x:\x \in \F_q^k \} \subseteq \F_q^n\); equivalently, we may consider $B^T \in \F_q^{k \times n}$ as a generator matrix of a linear code $C$ of length $n$, meaning the code consists of all vectors $\cv \in \F_q^{1 \times n}$ that satisfy the constraints given by a particular vector $\x \in \F_q^{k \times 1}$. The decoding task amounts to determining $\x$ from some received vector $\w \in \F_q^{1 \times n}$. More precisely, determine $\x \in  \F_q^{k \times 1}$ so that as many of the equations 
\(
b_{i1}x_1+\dots+b_{ik}x_k=w_i
\)
are satisfied as possible. In this way, we may see max-LINSAT as a nearest-neighbor decoding problem. In particular, given a candidate solution $\x \in \F_q^k$, we may consider the residual vector $
  r(\x) \coloneqq B\x - \w$.
The Hamming weight $\operatorname{wt}_H\big(r(\x)\big)$, meaning the number of nonzero coordinates of $r(\x)$, is precisely the number of violated constraints.
Hence, maximizing the count of satisfied constraints is equivalent to
minimizing $\operatorname{wt}_H(r(\x))$. This is the {nearest (codeword) neighbor decoding} viewpoint \cite{BMT78}. In this way, for a received word $\w$, the optimization becomes
\[
  \min_{\x \in \F_q^k} \operatorname{wt}_H(B \x - \w) \quad \Longleftrightarrow \quad
  \min_{\y \in C} \operatorname{wt}_H(\y - \w).
\]
Choose \( \y^\star \in \arg\min_{y\in C} \operatorname{wt}_H(\y-\w) \). Any preimage \(\x^\star \in \mathbb F_q^k\) satisfying \(B\x^\star = \y^\star\) is an optimal assignment.

The DQI procedure associates with each constraint in (\ref{eq:system}) a function:  for $i \in [n]$ and  $F_i:=\left\{ c_{i1}, \dots, c_{ir} \right\}$, consider \( f_i: \mathbb{F}_{q}\) defined by \[
f_i(a)= \begin{cases} +1,&a\in F_i,\\ -1,&a\notin F_i. \end{cases}
\]
Then the optimization objective for max-LINSAT can be expressed as
\[
f(\mathbf{x}) = \sum_{i=1}^n f_i(\mathbf{b}_i \cdot \mathbf{x}),
\]  
the difference between the numbers of satisfied and unsatisfied constraints. DQI constructs quantum states that are preferentially weighted toward solutions with higher objective values. The algorithm prepares states of the form
\[
\ket{P(f)} \propto \sum_{\mathbf{x} \in \mathbb{F}_{p}^k} P(f(\mathbf{x})) \ket{\mathbf{x}},
\]  
where $P$ is a polynomial of degree $\ell$ chosen to amplify amplitudes corresponding to large $f(\mathbf{x})$; we will return to this choice of $\ell$ shortly. 
Here, the symbol $\propto$ indicates proportionality; that is, the state $|P(f)\rangle$ is obtained by normalizing the vector $\sum_{\mathbf{x}\in\mathbb{F}_q^k} P(f(\mathbf{x}))|\mathbf{x}\rangle$ so that its total probability is one, making it a valid quantum state. Explicitly, \[ |P(f)\rangle = \frac{1}{\sqrt{\sum_{\mathbf{x}\in\mathbb{F}_q^k}|P(f(\mathbf{x}))|^2}} \sum_{\mathbf{x}\in\mathbb{F}_q^k} P(f(\mathbf{x}))|\mathbf{x}\rangle. \]
Assignments for which $|P(f(x))|$ is larger have greater measurement probability, since the associated probability is proportional to $|P(f(x))|^2$.
Thus, the state \(|P(f)\rangle\) may be viewed as a weighted superposition of all possible assignments. Each basis state \(|\mathbf{x}\rangle\) appears with weight determined by the objective value \(f(\mathbf{x})\) through the polynomial \(P\). Consequently, assignments that satisfy more constraints are emphasized in the quantum state, making them more likely to be produced when the state is measured.
Sampling from this state in the computational basis then yields solutions biased toward optimal or near-optimal configurations. As described in \cite{ Jordan_2025_extended,jordan2025dqi_nature}, the construction of these states can be performed efficiently via a reduction to {syndrome decoding}. Specifically, the matrix $B$ defines a linear code
\(
{C} = \{ B \mathbf{x} : \mathbf{x} \in \mathbb{F}_{q}^k\}\),  
and the algorithm works with its dual code
\(
{C}^\perp = \{\mathbf{v} \in \F_{q}^n : B^T \mathbf{v} = \mathbf{0}\}\).  
    
    Preparing the quantum state $\ket{P(f)}$ requires evaluating amplitudes that can be expressed in terms of codewords of the dual code $C^\perp$. As shown in \cite{jordan2025dqi_nature}, this task reduces to solving syndrome decoding problems for $C^\perp$: 
       given a syndrome arising from error patterns of weight at most $\ell$, the decoder is used to recover the corresponding low-weight error vector whose support contributes to the amplitude of a basis state. To ensure that the recovered error vector is unique, DQI restricts attention to error patterns of weight at most \begin{equation} \label{eq:min_dist} \ell = \left\lfloor \frac{d^\perp-1}{2}\right\rfloor, \end{equation} where $d^\perp$ is the minimum distance of $C^\perp$. Since every error of weight at most $\ell$ is uniquely determined by its syndrome, the decoding procedure can be implemented coherently inside the quantum algorithm.

The polynomial $P$ is chosen by the DQI algorithm as it serves as an amplification
function that maps objective values $f(\mathbf{x})$ to amplitudes in the
quantum state. By selecting $P$ so that larger values of $f(\mathbf{x})$
receive disproportionately larger weights, DQI increases the probability of
observing high-quality solutions upon measurement. The effectiveness of this amplification depends on the degree
$\ell=\deg(P)$. 
Within the optimized polynomial family used by DQI, increasing the admissible degree can strengthen the amplification.
However, implementing
a polynomial of degree $\ell$ requires solving syndrome decoding problems
of weight up to $\ell$ in the dual code $C^\perp$. Consequently, the
maximum admissible degree is  limited by the decoding radius of the dual code $C^\perp$.

Larger values of $\ell$, up to the upper bound given in (\ref{eq:min_dist}), may provide stronger bias toward optima along with harder decoding problems. 
By \cite{jordan2025dqi_nature}   
    \[
\frac{s}{n} = \begin{cases}
\left( \sqrt{ \frac{\ell}{n}\left( 1-\frac{r}{q} \right)}+\sqrt{ \frac{r}{q} \left( 1-\frac{\ell}{n} \right)} \right)^2, &\textnormal{ if } \frac{r}{q} \leq 1-\frac{\ell}{n},\\
1, & \textnormal{ otherwise}
\end{cases} \]
the expected number $s$ of constraints satisfied by $\x \in \F_q^k$ sampled in the final measurement of DQI satisfies the semi-circle law.

\section{DQI with Codes from Curves}
\label{sec:main}

In this section, we will discuss the DQI framework with  algebraic geometry codes. Following results on codes from arbitrary curves over finite fields, we will investigate important curve families. This information will be used in the next section to draw comparisons among them.

For DQI with algebraic geometry codes, we require efficient decoding of the dual code $C^\perp$ through the error weights used by the algorithm. The current DQI framework relies on unique syndrome decoding: the decoding step must coherently recover a low-weight error vector from its syndrome in order to uncompute the corresponding error register \cite{Jordan_2025_extended,jordan2025dqi_nature}. Consequently, we restrict attention to efficient unique-decoding algorithms for algebraic geometry codes. Although stronger list-decoding results are known for AG codes (see \cite{beelen2024list, beelen2025faster,Beelen2022FastDO}), incorporating such algorithms into the DQI framework would require modifying the decoding step to account for multiple candidate error vectors and is beyond the scope of the present construction.

Lee, Bras-Amorós, and O'Sullivan \cite{6744599} give an interpolation-based decoder for general multipoint evaluation codes $C_{\mathcal L}(D,G)$ that uniquely decodes if $2\operatorname{wt}(e)<d_{\mathrm{LO}}$, where $d_{\mathrm{LO}}\geq n-\deg(G)$. Hence, efficient decoding occurs for
\[
\ell=\left\lfloor\frac{d_{\mathrm{LO}}-1}{2}\right\rfloor
\geq
\left\lfloor\frac{n-\deg(G)-1}{2}\right\rfloor.\]
For the dual code $C_{\mathcal L}(D,G)^\perp=C_{\Omega}(D,G),$
Sakata and Fujisawa \cite{8294201} give an efficient decoding algorithm based on the Berlekamp--Massey--Sakata algorithm and majority voting. Since the Goppa designed distance satisfies
\[d^\perp\geq d_{\mathrm{Goppa}}^\perp=\deg(G)-2g+2,\]
the DQI decoding parameter may be taken to satisfy
\[\ell\geq
\left\lfloor\frac{\deg(G)-2g+1}{2}\right\rfloor.\]
Their algorithm decodes through this Goppa-based radius and may decode further according to the Kirfel--Pellikaan bound. Thus, the DQI decoding requirement can be guaranteed without knowing the exact dual minimum distance. Analogous efficient decoding results are available for one-point AG codes \cite{179340,SAKATA1990207,476248}, while multipoint codes may also be decoded via embeddings into suitable one-point codes \cite{drake}. 
 
We now introduce COFI: Curve-based Optimal Function Intersection.

\begin{definition}
    [Curve-based Optimal Function Intersection (COFI)] 
    \label{def:COFI}
    Let $\mathcal X$ be a smooth projective curve over $\F_q$ with distinct rational points $P_1,\ldots,P_n$, $G$ be a divisor on $\mathcal X$ whose support contains none of the points $P_1, \dots, P_n$, and $F_1,\ldots,F_n \subseteq \F_q$. The \emph{Curve-based Optimal Function Intersection} (COFI) problem is to determine \[ \operatorname*{argmax}_{f\in \mathcal L(G)} \left| \left\{ i\in[n] : f(P_i)\in F_i \right\} \right|, \] that is, to find a function $f$ in the Riemann--Roch space $\mathcal L(G)$ of the divisor $G$ whose evaluations satisfy the maximum number of coordinatewise constraints. 
\end{definition}

Observe that when $\mathcal X=\mathbb{P}_{\F_q}^1$, the projective line over $\F_q$ and $G=\alpha P_{\infty}$ for some positive integer $\alpha$, 
the Riemann--Roch space $\mathcal L(G)$ consists of univariate polynomials of degree at most $\alpha$, recovering OPI. Likewise, taking $\mathcal X$ to be the Hermitian curve over $\F_{q^2}$ and $G=\alpha P_{\infty}$ recovers HOPI. A key feature of COFI is that the optimization is carried out over the full Riemann--Roch space $\mathcal L(G)$ and is therefore not restricted to polynomial functions. In general, elements of $\mathcal L(G)$ may be rational functions with prescribed poles at points in $\operatorname{supp}(G)$. This substantially broadens the class of functions that can be represented within the DQI framework compared with the previously studied OPI and HOPI formulations, which optimize over polynomial representations. In particular, COFI naturally accommodates multipoint divisors and Riemann--Roch spaces whose elements cannot, in general, be represented solely by polynomials. Thus, COFI provides a common framework for optimization problems arising from algebraic geometry codes on arbitrary curves. Its properties are given in the next result. We let $l(G,g):= \left \lfloor \frac{\deg(G) -2g + 1}{2}\right \rfloor$ throughout the remainder of the paper. 

\begin{theorem} \label{thm:main}
    Consider a smooth projective curve $\mathcal X$ of genus $g$ over a finite field $\F_q$ and a divisor $G$ on $\mathcal{X}$. 
Let \(P_1,\ldots,P_n \in \mathcal X(\F_q) \setminus  \supp(G)\), and assume \(2g-2 < \deg(G) <n\). Consider COFI over $\mathcal X$ with $n$ constraints, alphabet $\F_q$, and constraint sets of size $r$. Let $d^{\perp}$ denote the 
minimum distance of the dual of 
$C_{\mathcal{L}}(D,G)$ and assume efficient decoding up to $\ell = \left \lfloor \frac{d^{\perp}-1}{2}\right \rfloor$ errors. Then, DQI produces assignments whose  expected satisfaction fraction satisfies 
    \[
\frac{s}{n} \geq 
\left( \sqrt{
\frac{1}{n} \left( l(G,g) \right) \left( 1-\frac{r}{q} \right)
}+\sqrt{
\frac{r}{q} \left(  1 - \frac{1}{n} \left(l(G,g)\right)\right)
} \right)^2\]  if  $\frac{r}{q} \leq 1 - \frac{\ell}{n}\leq 1-\frac{l(G,g)}{n}$ and
$\frac{s}{n} = 1$
 otherwise. 
\end{theorem}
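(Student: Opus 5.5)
The plan is to combine the semicircle law recalled from \cite{jordan2025dqi_nature} at the end of Section~\ref{sec:prelim} with the Goppa-type lower bound on the dual minimum distance from \cite[Corollary 2.2.11]{Stichtenoth}. First, I will identify the max-LINSAT instance underlying COFI. Fix a basis $f_1,\dots,f_k$ of $\mathcal L(G)$ and write $f=\sum_j x_j f_j$. Then the constraint $f(P_i)\in F_i$ becomes the linear constraint $(B\x)_i\in F_i$, where $B=[f_j(P_i)]$. Because $\deg G<n$, the evaluation map is injective, so $C=\{B\x\}=C_{\mathcal L}(D,G)$ has dimension $k=\ell(G)$. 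Hence COFI is exactly max-LINSAT for this code, and DQI runs with the dual code $C^\perp=C_\Omega(D,G)$. By the hypothesis of efficient decoding up to $\ell=\lfloor (d^\perp-1)/2\rfloor$ errors, the semicircle formula applies with this $\ell$. This gives $s/n=\bigl(\sqrt{\tfrac{\ell}{n}(1-\tfrac rq)}+\sqrt{\tfrac rq(1-\tfrac{\ell}{n})}\bigr)^2$ when $r/q\le 1-\ell/n$, and $s/n=1$ otherwise.

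Second, I will show $\ell\ge l(G,g)$. Since $\deg G>2g-2$, \cite[Corollary 2.2.11]{Stichtenoth} gives $d^\perp\ge \deg G-2g+2$. Therefore $\ell=\lfloor (d^\perp-1)/2\rfloor\ge \lfloor(\deg G-2g+1)/2\rfloor=l(G,g)$, because the floor is monotone. In particular $1-\ell/n\le 1-l(G,g)/n$, which is the chain of inequalities in the hypothesis.

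Third, and this is the main step, I will show that the semicircle expression is nondecreasing in $\ell$ on the relevant range. Set $\rho=r/q$ and $t=\ell/n\in[0,1-\rho]$, and let $\phi(t)=\bigl(\sqrt{t(1-\rho)}+\sqrt{\rho(1-t)}\bigr)^2$. Writing $t=\sin^2\theta$ and $\rho=\sin^2\beta$, we get $\phi=\sin^2(\theta+\beta)$. The condition $t\le 1-\rho$ means $\theta+\beta\le\pi/2$, so $\phi$ is increasing in $\theta$, hence in $t$, on this interval. Alternatively, one can differentiate directly: the sign of $\phi'(t)$ matches that of $\sqrt{(1-\rho)(1-t)}-\sqrt{\rho t}$, which is $\ge0$ iff $t\le 1-\rho$.

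Finally, in the case $\rho\le 1-\ell/n$, both $l(G,g)/n$ and $\ell/n$ lie in $[0,1-\rho]$. Monotonicity then gives $s/n=\phi(\ell/n)\ge\phi(l(G,g)/n)$, which is the claimed bound. In the case $\rho>1-\ell/n$, the semicircle law directly gives $s/n=1$. One small issue needs care: when $l(G,g)$ is negative the square roots are not real. This cannot happen in the regime considered, since $\deg G\ge 2g-1$ forces $l(G,g)\ge0$, and I will note that explicitly.
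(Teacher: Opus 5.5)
Your proposal is correct and follows the same route as the paper: the Goppa bound $d^\perp\ge \deg G-2g+2$ gives $\ell\ge l(G,g)$, which is substituted into the semicircle law, and the saturated case is read off directly. Your third step (that $\phi(t)=\sin^2(\theta+\beta)$ is nondecreasing on $[0,1-\rho]$) and your check that $l(G,g)\ge 0$ spell out what the paper's phrase ``substituting this lower bound'' leaves implicit, so your version is, if anything, more complete.
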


\begin{proof}
Let \(C=C_{\mathcal L}(D,G)\). By the standard duality theorem for algebraic geometry codes, \[ C^\perp = C_{\mathcal L}(D,D-G+K) \] for a canonical divisor $K$. If \(\deg (G) > 2g-2\), then the designed lower bound gives \[ d^\perp \ge \deg (G)-2g+2. \] Hence the dual unique decoding radius satisfies \[ \ell = \left\lfloor\frac{d^\perp-1}{2}\right\rfloor \ge \left\lfloor\frac{\deg (G)-2g+1}{2}\right\rfloor = l(G,g). \] Substituting this lower bound for \(\ell\) into DQI satisfaction formula yields the asserted lower bound. The second case follows from the saturation case of the semi-circle law.   
\end{proof}

\begin{corollary}[DQI from self-dual AG codes]
\label{cor:selfdual_dqi}
Let $\mathcal X$ be a smooth projective curve of genus $g$ over $\F_q$, let $D=P_1+\cdots+P_n,$ and let $G$ be a divisor such that  $C=C_{\mathcal L}(D,G)$ is self-dual. Assume $n$ is even. Then the satisfaction bound in Theorem~\ref{thm:main} may be evaluated using
\[
\frac{\ell}{n}
\geq
\frac{1}{n}
\left\lfloor\frac{n-2g}{4}\right\rfloor.
\]
\end{corollary}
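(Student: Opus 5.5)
The plan is to pin down $\deg G$ from self-duality and then feed it into the Goppa lower bound on $d^\perp$ used in the proof of Theorem~\ref{thm:main}. Since $C=C_{\mathcal L}(D,G)$ is self-dual, $C=C^\perp$, so $\dim C=\dim C^\perp=n-\dim C$, giving $\dim C=n/2$; this is where the hypothesis that $n$ is even enters. Next I would show $\deg G = n/2+g-1$. If $2g-2<\deg G<n$, then $\dim C=\ell(G)=\deg G+1-g$ by Riemann--Roch and injectivity of $ev$, and $\deg G=n/2+g-1$ follows at once. More intrinsically, the standard characterization of self-dual AG codes (\cite{Stichtenoth}) gives $2G\sim D+K$ for a suitable canonical divisor $K$ with the normalization of Section~\ref{sec:prelim}. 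Taking degrees yields $2\deg G=n+2g-2$, the same value, and I would cite this if the range hypothesis is not available.

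Then $\deg G-2g+2=n/2-g+1$. Because $C^\perp=C$, the dual distance $d^\perp$ equals $d(C)\ge n-\deg G=n/2-g+1$; the Goppa bound $\deg G-2g+2$ gives the same number. Hence
\[
\ell=\left\lfloor\frac{d^\perp-1}{2}\right\rfloor\ \ge\ \left\lfloor\frac{n/2-g}{2}\right\rfloor=\left\lfloor\frac{n-2g}{4}\right\rfloor,
\]
and dividing by $n$ gives the stated bound. Equivalently, $l(G,g)=\lfloor(\deg G-2g+1)/2\rfloor=\lfloor (n-2g)/4\rfloor$. This quantity can then be substituted into the lower bound of Theorem~\ref{thm:main}, since that bound is monotone in $\ell/n$ within the nonsaturated regime.

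The main obstacle is justifying $\deg G=n/2+g-1$ without assuming the hypotheses of Theorem~\ref{thm:main}. If $\deg G\le 2g-2$, then $\ell(G)$ may exceed $\deg G+1-g$. My first approach is to use the linear-equivalence criterion $2G\sim D+K$, which forces the degree regardless of range. Alternatively, I would note that the corollary is applied inside Theorem~\ref{thm:main}, so $2g-2<\deg G<n$ may be assumed, and the dimension computation suffices.
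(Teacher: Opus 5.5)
Your argument is correct and matches the paper's route: self-duality gives $\dim C=n/2$, Riemann--Roch in the range $2g-2<\deg G<n$ inherited from Theorem~\ref{thm:main} gives $\deg G=\frac{n+2g-2}{2}$, and substituting into $l(G,g)$ yields $\left\lfloor\frac{n-2g}{4}\right\rfloor$. Drop the fallback via $2G\sim D+K$, though: that criterion is in general only sufficient for self-duality, and for special $G$ the code does not even determine $\deg G$ (one can add a point $P$ with $\mathcal L(G+P)=\mathcal L(G)$); if you want to avoid assuming $\deg G>2g-2$, use your own aside instead, since $\deg G<n$ and Riemann's inequality give $\deg G\le \frac{n}{2}+g-1$, hence $d^\perp=d(C)\ge n-\deg G\ge \frac{n}{2}-g+1$, which already yields the bound.
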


\begin{proof}
Since $C$ is self-dual, $\dim(C)=\frac{n}{2}.$ Since $\deg(G)>2g-2$, the Riemann--Roch theorem gives $\dim(C)=\ell(G)=\deg(G)+1-g.$ Therefore, $\deg(G)+1-g=\frac{n}{2},$ and hence $\deg(G)=\frac{n+2g-2}{2}.$
Substituting this into the decoding radius
$$
\ell
\geq
\left\lfloor
\frac{\deg(G)-2g+1}{2}
\right\rfloor = l(G,g)
$$
gives
$$
\ell
\geq
\left\lfloor
\frac{n-2g}{4}
\right\rfloor.
$$

The result follows from Theorem~\ref{thm:main}.
\end{proof}

Next, we focus on algebraic geometry codes from two important curves with many rational points: the Suzuki curve and the extended norm--trace curve, which contains the Hermitian curve as a special case. The Suzuki curve, introduced in \cite{Suzuki1962}, is given by 
\[ 
y^q-y=x^{q_0}(x^q-x)
\]
over $\F_q$, where $q_0=2^t$ and $q=2q_0^2=2^{2t+1}$ for some positive integer $t$. 
Let $\mathcal S_q$ denote its smooth projective model.
Observe that $(a:b:1) \in \mathcal S_q (\F_q)$ since $a^q-a=0=b^q-b$ for all $a, b \in \F_q$. Moreover, $\mathcal S_q$ has a single point at infinity, $P_{\infty}:=(0:1:0)$. As a result, \[\mid \mathcal S_q \left( \F_q \right) \mid = q^2+1 .\] The genus of $\mathcal S_q$ is $g=q_0(q-1)$. It can be shown that $\mathcal S_q$ is an optimal curve \cite{vanderGeer1997HowTo}. Moreover, the dual of a one-point code on the Suzuki curve is another one-point code, namely for positive $\gamma \in \Z$, 
\[
C_{\mathcal L}(D, \gamma P_{\infty})^{\perp}=C_{\mathcal L}\left(D, \left(2q_0(q-1)+q^2-2-\gamma \right) P_{\infty} \right).
\]
These one-point codes are also explicit. Indeed, a basis for the Riemann-Roch space $\mathcal L(\gamma P_{\infty})$ is 
\[\left\{
x^{i}y^{j}v^{i'}w^{j'}\;: \begin{array}{l}
i,j,i',j'\in\N, 0\le j \le q-1,
0\le i' \le q_0-1,
0\le j' \le q_0-1,\\
iq + j(q+q_0)+i'(q+2q_0)+j'(q+2q_0+1)\le \gamma 
\end{array} \right\}\]
where $v:=y^{\frac{q}{q_0}}-x^{\frac{q}{q_0}+1}$ and $w:=y^{\frac{q}{q_0}}x^{\frac{q}{q_0^2}-1}+v^{\frac{q}{q_0}}$. 
Additional details on Suzuki codes may be found in \cite{Hansen1990}.

\begin{corollary}\label{Suzuki}
    Consider the Suzuki curve $\mathcal S_q$ over $\F_q$, and let $C^{\perp}$ be the dual of a one-point Suzuki code where $G = \alpha P_{\infty}$ with minimum distance $d_S^{\perp}$ and assume efficient decoding up to $\ell = \left \lfloor \frac{d_{S}^{\perp}-1}{2}\right \rfloor$ errors. Then DQI produces solutions with expected fraction of satisfied constraints
\[
\frac{s}{q^2} \geq \left( \sqrt{\frac{l(G,g)}{q^2}\left(1-\frac{r}{q}\right)}+\sqrt{\frac{r}{q} \left( 1-\frac{l(G,g)}{q^2}\right)} \right)^2
\]
if $\frac{r}{q} \leq 1-\frac{\ell}{q^2} \leq 1 - \frac{l(G,g)}{q^2}$ and $\frac{s}{q^2} =1$ otherwise. 
\end{corollary}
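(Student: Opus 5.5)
The plan is to obtain this as a direct specialization of Theorem~\ref{thm:main} to $\mathcal X=\mathcal S_q$, $G=\alpha P_\infty$, and $D$ the sum of all affine rational points. The proof has three steps: verify the hypotheses of Theorem~\ref{thm:main}, identify $n$, and substitute.

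First, the setup. Since $|\mathcal S_q(\F_q)|=q^2+1$ and $P_\infty=(0:1:0)$ is the unique point at infinity, the points of $\mathcal S_q(\F_q)\setminus\supp(G)$ are exactly the $q^2$ affine points $(a:b:1)$. These are pairwise distinct, because each pair $(a,b)$ with $a,b\in\F_q$ lies on the curve, as observed just before the statement. Taking $D=\sum_{a,b}P_{ab}$ gives $n=q^2$, so the one-point code $C_{\mathcal L}(D,\alpha P_\infty)$ has length $q^2$. The genus is $g=q_0(q-1)$ and $\deg G=\alpha$, so $l(G,g)=\lfloor(\alpha-2q_0(q-1)+1)/2\rfloor$.

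Second, the hypotheses. Theorem~\ref{thm:main} requires $2g-2<\deg G<n$, that is, $2q_0(q-1)-2<\alpha<q^2$. The corollary does not state this explicitly. I would read it as implicit in the phrase ``one-point Suzuki code'' within this framework, or else add it as an assumption; this is the only genuine gap to address. Without the lower bound, $l(G,g)$ could be negative, which would make the square root $\sqrt{l(G,g)/q^2}$ meaningless. The upper bound $\alpha<q^2$ guarantees that evaluation is injective, so the dimension is $\alpha+1-g$. The assumption of efficient decoding up to $\ell=\lfloor (d_S^\perp-1)/2\rfloor$ is exactly the decoding hypothesis of the theorem. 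One could justify it concretely using the one-point decoders cited for $C_\Omega(D,G)$, or via the stated duality $C^\perp=C_{\mathcal L}(D,(2q_0(q-1)+q^2-2-\alpha)P_\infty)$, but this is not needed.

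Third, substitution. Replacing $n$ by $q^2$ in the conclusion of Theorem~\ref{thm:main} gives the displayed lower bound on $s/q^2$ in the regime $r/q\le 1-\ell/q^2\le 1-l(G,g)/q^2$. The saturated case $s/q^2=1$ follows in the same way. The inequality $\ell\ge l(G,g)$, which underlies the chain of conditions, comes from the Goppa bound $d_S^\perp\ge \alpha-2g+2$ exactly as in the proof of Theorem~\ref{thm:main}. The substitution itself is routine.
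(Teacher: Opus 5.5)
Your proposal is correct and matches the paper's proof, which simply substitutes $n=q^2$ and $\ell\ge l(G,g)$ into Theorem~\ref{thm:main}. Your remark that the hypothesis $2q_0(q-1)-2<\alpha<q^2$ has to be assumed implicitly is fair: without it $l(G,g)$ can be negative, and the paper's one-line proof leaves this unstated.
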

\begin{proof} Noting that \(n= q^2\) and \( \ell \geq  l(G,g)\) yields the result.
\end{proof}

Next, we consider  \Cref{thm:main} for a family of curves over $\F_{q^t}$ where $t \geq 2$ is an integer
that contains the Hermitian curve as a special case. Recall that the norm and trace of $a \in \F_{q^t}$ with respect to the extension field $\F_{q^t}$ of $\F_q$ are defined to be
\(
N(a):=a^{\frac{q^t-1}{q-1}}\) and \(
Tr(a):=\sum_{i=0}^{t-1} a^{q^i}\).
The extended norm--trace curve $\mathcal X_{q,t,u}$ over $\F_{q^t}$ is given by $x^u=Tr(y)$
where $u \mid \frac{q^t-1}{q-1}$. 
It has genus $g=\frac{(u-1)(q^{t-1}-1)}{2}$, and its number of $\F_{q^t}$-rational points is \[\mid \mathcal X_{q,t,u}(\F_{q^t}) \mid=q^{t-1}+u(q^t-q^{t-1})+1.\] Indeed, the number of $b \in \F_{q^t}$ with the same trace is $q^{t-1}$, and there are $(q-1)u+1$ elements  $a \in \F_{q^t} \setminus \{ 0 \}$ satisfying $a^u \in F_q$. Moreover, $\mathcal X_{q,t,u}$ has a unique point at infinity 
\[
P_{\infty}=\begin{cases}
    (0:1:0) & \textnormal{ if } u=\frac{q^t-1}{q-1}\\
    (1:0:0) & \textnormal{ if } u \neq \frac{q^t-1}{q-1}.
\end{cases}
\]
When $u=\frac{q^t-1}{q-1}$, $\mathcal X_{q,t,u}$ is referred to as the norm--trace curve and is given by 
\[N(x)=Tr(y).\]
The dual of $C_{\mathcal L}(D,\gamma P_{\infty})$ is 
\(C_{\mathcal L} \left(D,\gamma P_{\infty})^{\perp}=C_{\mathcal L}(D,(u(q^t-1)-1-\gamma) P_{\infty} \right)\).

\begin{corollary}\label{ent} Consider the extended norm--trace curve $\mathcal{X}_{q,t,u}$ over $\F_{q^t}$ where $u \mid \frac{q^{t}-1}{q-1}$. Let $C^{\perp}$ be the dual of an extended norm--trace code with minimum distance $d_{ENT}^{\perp}$ and assume efficient decoding up to $\ell = \left \lfloor \frac{d_{ENT}^{\perp}-1}{2}\right \rfloor$ errors. Then DQI produces assignments with expected fraction of satisfied constraints 
\begin{multline*}
\frac{s}{q^{t-1}+u(q^t-q^{t-1})} \geq \\
\left( \sqrt{\frac{l(G,g)}{q^{t-1}+u(q^t-q^{t-1})}\left(1-\frac{r}{q^t}\right)}+\sqrt{\frac{r}{q^t} \left( 1-\frac{l(G,g)}{q^{t-1}+u(q^t-q^{t-1})}\right)} \right)^2
\end{multline*}
if $\frac{r}{q^t} \leq 1-\frac{\ell}{q^{t-1}+u(q^t-q^{t-1})} \leq 1-\frac{l(G,g)}{q^{t-1}+u(q^t-q^{t-1})}$ and $\frac{s}{q^{t-1}+u(q^t-q^{t-1})} =1$ otherwise.

\end{corollary}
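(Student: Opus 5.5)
The plan is to obtain Corollary~\ref{ent} as a direct specialization of Theorem~\ref{thm:main}, in the same way Corollary~\ref{Suzuki} was obtained for the Suzuki curve. Take $\mathcal X=\mathcal X_{q,t,u}$ over the field $\F_{q^t}$, so the alphabet size $q$ in Theorem~\ref{thm:main} is replaced by $q^t$. Take the one-point divisor $G=\alpha P_\infty$. For the evaluation divisor $D$, use all affine $\F_{q^t}$-rational points.

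First I would determine $n$. The count $|\mathcal X_{q,t,u}(\F_{q^t})|=q^{t-1}+u(q^t-q^{t-1})+1$ includes the unique point at infinity $P_\infty$. This point lies in $\supp(G)$ and must be excluded, so
\[
n=q^{t-1}+u(q^t-q^{t-1}).
\]
This follows because the affine points are counted by pairs $(a,b)$ with $a^u=Tr(b)\in\F_q$. The value $a=0$ contributes the $q^{t-1}$ elements of trace zero. Each of the $u(q-1)$ nonzero $a$ with $a^u\in\F_q^*$ contributes $q^{t-1}$ points, for $u(q^t-q^{t-1})$ in total.

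Next I would check the hypotheses of Theorem~\ref{thm:main}. These are $P_i\notin\supp(G)$, which holds since $\supp(G)=\{P_\infty\}$, and $2g-2<\deg G=\alpha<n$ with $g=\frac{(u-1)(q^{t-1}-1)}{2}$. The latter is implicit in the statement; I would make it explicit as an assumption on $\alpha$.

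Theorem~\ref{thm:main} then gives $\ell\ge l(G,g)$, via the Goppa bound $d^\perp_{ENT}\ge \alpha-2g+2$. It also gives the semicircle lower bound. In that bound I would substitute $n=q^{t-1}+u(q^t-q^{t-1})$ and replace $r/q$ by $r/q^t$, using $|\F_{q^t}|=q^t$. The saturated case follows in the same way from the second branch of the semicircle law.

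The only potential obstacle is being careful that the monotonicity implicit in Theorem~\ref{thm:main} (the bound is increasing in $\ell/n$ in the nonsaturated regime) is inherited unchanged. Since the corollary is a pure substitution, this is immediate. The duality formula $C_{\mathcal L}(D,\gamma P_\infty)^\perp=C_{\mathcal L}(D,(u(q^t-1)-1-\gamma)P_\infty)$ is not needed beyond confirming that the dual stays in the same family, so that the assumed efficient decoder applies.
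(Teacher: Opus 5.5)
Your proposal is correct and follows the paper's proof, which consists exactly of substituting $n=q^{t-1}+u(q^t-q^{t-1})$, alphabet size $q^t$, and $\ell\ge l(G,g)$ into Theorem~\ref{thm:main}. The paper leaves several details implicit that you supply: the affine point count, the hypothesis $2g-2<\alpha<n$, and the fact that the semicircle bound is increasing in $\ell/n$ on the nonsaturated range, which is what justifies replacing $\ell$ by $l(G,g)$.
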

\begin{proof} Substituting \(q^{t-1} + u(q^t - q^{t-1})\) for \(n\) and \(\ell\geq l(G,g)\) yields the result.
\end{proof}

Taking $t=2$ and $u=q+1$ in Corollary \ref{ent} gives the Hermitian curve $\mathcal \mathcal H_q: x^{q+1}=y^q+y$ over $\F_{q^2}$ and recovers \cite[Theorem 1]{Jordan_Gu_AG_DQI_25} where $s/n$ becomes an equality since exact minimum distances of one-point Hermitian codes are known. Lastly, we consider \Cref{thm:main} for two-point Hermitian codes, where $H_q: x^{q+1} = y^q+y$ over $\F_{q^2}$ and $G = \alpha Q_1 + \beta Q_2$, $D = P_1 + \ldots + P_n$ and $n_{2pt} = q^3 - 1$. Here, $C_{\mathcal{L}}(D,\alpha Q_1 + \beta Q_2)^{\perp} = C_{\Omega}(D,\alpha Q_1 + \beta Q_2)$ and we will refer to $C_{\Omega}(D,\alpha Q_1 + \beta Q_2)$ as a two-point code as well in the remainder of the paper. We have the following result. 

\begin{corollary} Consider the Hermitian curve $H_q$ over $\F_{q^2}$. Let $C^{\perp}$ be the dual of a two-point Hermitian code $C_{\mathcal{L}}(D,\alpha Q_1 + \beta Q_2)$ with minimum distance $d_{2pt}^{\perp}$, and assume efficient decoding up to $\ell = \left \lfloor \frac{d_{2pt}^{\perp}-1}{2}\right \rfloor$ errors. Then DQI produces solutions with expected fraction of satisfied constraints 
\[\frac{s}{q^3-1} \geq \left( \sqrt{\frac{l(G,g)}{q^3-1}\left(1-\frac{r}{q^2}\right)}+\sqrt{\frac{r}{q^2} \left( 1-\frac{l(G,g)}{q^3-1}\right)} \right)^2\]

if $\frac{r}{q^2} \leq 1-\frac{\ell}{q^3-1} \leq 1 - \frac{l(G,g)}{q^3-1}$ and $\frac{s}{q^3-1} =1$ otherwise.

\end{corollary}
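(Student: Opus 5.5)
The plan is to obtain this as a direct specialization of Theorem~\ref{thm:main}, in the same way as Corollaries~\ref{Suzuki} and~\ref{ent}. The curve is $\mathcal X = H_q$ over $\F_{q^2}$, so the alphabet size $q$ in Theorem~\ref{thm:main} becomes $q^2$. The genus is $g = \frac{q(q-1)}{2}$, and the divisor is $G = \alpha Q_1 + \beta Q_2$, where $Q_1, Q_2$ are two distinct $\F_{q^2}$-rational points. The evaluation points are the remaining rational points. Since $|H_q(\F_{q^2})| = q^3+1$, this gives $D = P_1 + \cdots + P_n$ with $n = n_{2pt} = q^3-1$, and $\supp(G) \cap \supp(D) = \emptyset$ as Theorem~\ref{thm:main} requires.

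The first step is to check the hypotheses of Theorem~\ref{thm:main}. We need $2g-2 < \deg(G) = \alpha + \beta < q^3-1$. I would state this explicitly as the standing range of $\alpha,\beta$, implicit in the corollary as in the one-point cases.

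In this range, the duality discussion in Section~\ref{sec:prelim} applies. It gives $C_{\mathcal L}(D,G)^\perp = C_\Omega(D,G)$, together with the Goppa bound $d_{2pt}^\perp \geq \deg(G) - 2g + 2$. Hence
\[
\ell = \left\lfloor \frac{d_{2pt}^\perp - 1}{2} \right\rfloor \geq \left\lfloor \frac{\alpha+\beta-2g+1}{2} \right\rfloor = l(G,g).
\]
The efficient-decoding assumption up to $\ell$ is part of the hypothesis. As noted before Definition~\ref{def:COFI}, it is justified by Sakata--Fujisawa decoding of $C_\Omega(D,G)$, or by embedding the two-point code into a one-point code.

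Next I would substitute $n = q^3-1$ and the field size $q^2$ into the semicircle law. The map $\lambda \mapsto \bigl(\sqrt{\lambda(1-\rho)} + \sqrt{\rho(1-\lambda)}\bigr)^2$, with $\rho = r/q^2$ and $\lambda = \ell/n$, is nondecreasing on $0 \le \lambda \le 1-\rho$. Its maximum value $1$ is attained at $\lambda = 1-\rho$. Therefore, whenever $r/q^2 \le 1 - \ell/n$, replacing $\ell/n$ by the smaller value $l(G,g)/n$ only decreases the right-hand side, which gives the stated lower bound. Otherwise the saturated case of Theorem~\ref{thm:main} gives $s/n = 1$.

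The main obstacle is not the algebra. It is making sure the monotonicity argument is legitimate, that is, that $l(G,g)/n$ lies in the interval where the semicircle expression is increasing. This follows from the chain of inequalities $r/q^2 \le 1 - \ell/n \le 1 - l(G,g)/n$ assumed in the statement. A secondary point is to confirm that the two-point setup really excludes exactly $Q_1$ and $Q_2$, so that $n = q^3-1$.
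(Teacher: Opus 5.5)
Your proposal is correct and follows the paper's route: the paper's proof is simply the specialization of Theorem~\ref{thm:main} with $n=q^3-1$, alphabet $\F_{q^2}$, and $G=\alpha Q_1+\beta Q_2$. Your explicit checks of the hypothesis $2g-2<\deg(G)<n$ and of the monotonicity of $\lambda\mapsto\bigl(\sqrt{\lambda(1-\rho)}+\sqrt{\rho(1-\lambda)}\bigr)^2$ on $[0,1-\rho]$ spell out what the paper leaves implicit when it substitutes $l(G,g)$ for $\ell$.
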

\begin{proof} Noting that \(q^3 - 1=n\) and $G=\alpha Q_1 + \beta Q_2$ yields the result.
\end{proof}

\subsection{Examples}

We now provide examples of the max-LINSAT formulation for the one-point Suzuki codes and two-point Hermitian codes.

\begin{example}
  Consider the Suzuki curve $\mathcal S_8: \mathcal S_8:y^8 - y = x^2(x^8 - x)$ over $\F_8$. Let $G=22P_{\infty}$. A basis for $\mathcal L(22P_\infty)$ is
\[
\mathcal B=\{1,x,y,v,w,x^2,xy,y^2,xw,yv\}
\]
where 
\(
v = y^{\frac{q}{q_0}}-x^{\frac{q}{q_0}+1} = y^4 - x^5\) and \( w = y^{\frac{q}{q_0}}x^{\frac{q}{q_0^2} - 1} - v^{\frac{q}{q_0}} = y^4x - v^4\) therefore $k=\ell(G) = 10$. Thus, the one-point Suzuki code
\(
C_{\mathcal L}(D,22P_\infty) \) with \( D=\sum_{a,b\in \F_8} P_{ab},
\)
has parameters $[64,10,\ge 42]$. Its generator matrix is 
\[
M=\big(f_i(P_{ab})\big)_{f_i\in\mathcal B,\ (a,b)\in \F_8^2}=
\begin{pmatrix}
1 & 1 & \cdots & 1\\
a_{1} & a_{2} & \cdots & a_{64}\\
b_{1} & b_{2} & \cdots & b_{64}\\
v(a_{1},b_{1}) & v(a_{2},b_{2}) & \cdots & v(a_{64},b_{64})\\
w(a_{1},b_{1}) & w(a_{2},b_{2}) & \cdots & w(a_{64},b_{64})\\
a_{1}^2 & a_{2}^2 & \cdots & a_{64}^2\\
a_{1}b_{1} & a_{2}b_{2} & \cdots & a_{64}b_{64}\\
b_{1}^2 & b_{2}^2 & \cdots & b_{64}^2\\
a_{1}w(a_{1},b_{1}) & a_{2}w(a_{2},b_{2}) & \cdots & a_{64}w(a_{64},b_{64})\\
b_{1}v(a_{1},b_{1}) & b_{2}v(a_{2},b_{2}) & \cdots & b_{64}v(a_{64},b_{64})
\end{pmatrix}.
\]
Any function \(f \in \mathcal L(22P_\infty)\) can be written as
\[
f
= c_0 + c_1 x + c_2 y + c_3 v + c_4 w + c_5 x^2 + c_6 xy + c_7 y^2 + c_8 xw + c_9 yv,
\]
with coefficient vector \(\mathbf{x} = (c_0,\dots,c_9)^T \in \F_8^{10}\).
Evaluating at a point \(P_{ab}\), we obtain \(f(P_{ab})
=\)
\[
c_0 + c_1 a + c_2 b + c_3 v(a,b) + c_4 w(a,b)
+ c_5 a^2 + c_6 ab + c_7 b^2
+ c_8 a\,w(a,b) + c_9 b\,v(a,b),
\]
which is linear in the coefficients \(\mathbf{x}\). Define the matrix \(B = M^T \in \F_8^{64\times 10}\). Then the \(i\)th row of \(B\) is
\[
\bigl(1,\ a_i,\ b_i,\ v(a_i,b_i),\ w(a_i,b_i),\ a_i^2,\ a_i b_i,\ b_i^2,\ a_i w(a_i,b_i),\ b_i v(a_i,b_i)\bigr).
\]
Thus, for \(\mathbf{x} \in \F_8^{10}\), \((B\mathbf{x})_i = f(P_{a_i,b_i});\)
that is, \(B\mathbf{x}\) is the evaluation vector of \(f\) at the affine rational points. Now, let \(F_i \subseteq \F_8\) be the set of allowed values at each point \(P_i\), for each $i \in [64]$. The Curve-based Optimal function intersection problem for the Suzuki curve over $\mathcal{F}_8$ can then be
formulated as the max-LINSAT problem
\[
\max_{\mathbf{x}\in \F_8^{10}}
\left|
\left\{
i \in [64] : (B\mathbf{x})_i \in F_i
\right\}
\right|,
\]
equivalently,
\[
\max_{f\in \mathcal L(22P_\infty)}
\left|
\left\{
 i \in [64] : f(P_i) \in F_i
\right\}
\right|.
\]
Hence, the variable \(\mathbf{x}\) represents the coefficient vector of
\(f \in \mathcal L(22P_\infty)\), while the matrix \(B = M^T\) encodes evaluation at
the affine rational points. In this way, the max-LINSAT problem is equivalent to
selecting a codeword in \(C\) that satisfies the maximum number of coordinatewise constraints.

\end{example}  

We now provide a max-LINSAT formulation for an example involving two-point Hermitian codes. This demonstrates that COFI uses rational functions that are not  polynomials. More details on these codes can be found in \cite[Example 5.3]{matthews2001weierstrass}. 

\begin{example} Let $\mathcal H$ denote the Hermitian curve over $\F_{64}$ given by $\mathcal H:y^8+y=x^9,$ and let $P_1=P_{00}, P_2=P_\infty.$
Then $\mathcal H$ has genus $g=\frac{q(q-1)}{2}=28,$ where $q=8$. Consider the divisor
\(
G=7P_{00}+82P_\infty\).
Since $\deg(G)=89>2g-2=54,$ $\ell(G)=\deg(G)-g+1=89-28+1=62.$ The divisors of the coordinate functions are
\(
(x)
=
\sum_{\beta^8+\beta=0}P_{0\beta}-8P_\infty\) and
\(
(y)=9(P_{00}-P_\infty)\).
By \cite[Theorem 3.4]{matthews2001weierstrass}, the two-point Weierstrass semigroup uses the rational functions $\frac{x^{q-s+1}}{y^{t-s+1}},$
whose pole divisors are
\[
\left(
\frac{x^{q-s+1}}{y^{t-s+1}}
\right)_\infty
=
\bigl((t-s)(q+1)+s\bigr)P_{00}
+
\bigl((q-t-1)(q+1)+s\bigr)P_\infty.
\]
For $q=8$, taking $t=s=r$ gives
\(
\left(
\frac{x^{9-r}}{y}
\right)_\infty
=
rP_{00}+(63-8r)P_\infty\).
Hence, for $r=1,\ldots,7$, we obtain the rational functions
\[
\frac{x^8}{y},\,
\frac{x^7}{y},\,
\frac{x^6}{y},\,
\frac{x^5}{y},\,
\frac{x^4}{y},\,
\frac{x^3}{y},\,
\frac{x^2}{y}  
 \in \mathcal L(7P_{00}+82P_\infty).\] It can be shown that the remaining basis elements can be taken from the basis \(
\mathcal B_\infty:=\{x^iy^j,
i\geq0,
0\leq j\leq7,
8i+9j\leq82\}\) of 
$\mathcal L(82P_\infty)$ which contains $55$ functions. Therefore, a basis for $\mathcal L(G)$ is
\[
\mathcal B
=
\mathcal B_\infty
\cup
\left\{
\frac{x^8}{y},
\frac{x^7}{y},
\frac{x^6}{y},
\frac{x^5}{y},
\frac{x^4}{y},
\frac{x^3}{y},
\frac{x^2}{y}
\right\},
\]
and $|\mathcal B|=55+7=62.$ Let $\mathcal B=\{f_1,\ldots,f_{62}\}.$
Equivalently, define \[ L_j= \begin{cases} 2, & j=-1,\\ 0, & 0\leq j\leq7, \end{cases} \qquad U_j= \begin{cases} 8, & j=-1,\\ \left\lfloor\dfrac{82-9j}{8}\right\rfloor, & 0\leq j\leq7. \end{cases} \] The basis can be written compactly as $\mathcal B = \left\{ x^iy^j: -1\leq j\leq7,\; L_j\leq i\leq U_j \right\}.$ Indeed, \( |\mathcal B| = 7+ \sum_{j=0}^{7} \left( \left\lfloor\frac{82-9j}{8}\right\rfloor+1 \right) = 7+55 = 62 = \ell(G)\),  so these functions form a basis of $\mathcal L(G)$. Consequently, every function $f\in\mathcal L(G)$ can be written uniquely as \( f(x,y) = \sum_{j=-1}^{7} \sum_{i=L_j}^{U_j} a_{ij}x^iy^j, \qquad a_{ij}\in\F_{64}.\) In particular, the terms corresponding to $j=-1$ are \[ \sum_{i=2}^{8}a_{i,-1}x^iy^{-1} = a_{2,-1}\frac{x^2}{y} +a_{3,-1}\frac{x^3}{y} +\cdots +a_{8,-1}\frac{x^8}{y}. \] Let $D = \sum_{P\in\mathcal H(\F_{64}) \setminus\{P_{00},P_\infty\}}P.$ Since the Hermitian curve has $q^3+1=8^3+1=513$ $\F_{64}$-rational points, we have $\deg(D)=511.$ The corresponding evaluation code is $C=C_{\mathcal L}(D,G).$ Since $\deg(G)<\deg(D)$, the evaluation map is injective, and therefore $\dim(C)=\ell(G)=62.$ Let $\mathcal B=\{f_1,\ldots,f_{62}\}$ be an ordering of the basis above, and let $\mathbf a=(a_1,\ldots,a_{62})^T\in\F_{64}^{62}$ be the corresponding coefficient vector, so that \[ f=\sum_{h=1}^{62}a_hf_h. \] Let $M= \bigl(f_h(P_m)\bigr)_{ f_h\in\mathcal B,\; P_m\in\operatorname{supp}(D)} \in\F_{64}^{62\times511}$ be a generator matrix for $C$, and define $B=M^T\in\F_{64}^{511\times62}.$ Then $(B\mathbf a)_m=f(P_m)$ for every $m\in[511]$. Now let $F_m\subseteq\F_{64}$, $m\in[511],$ be the allowed value set at the point $P_m$. The corresponding COFI problem can be written as \[ \max_{\mathbf a\in\F_{64}^{62}} \left| \left\{ m\in[511]:(B\mathbf a)_m\in F_m \right\} \right|. \] Equivalently, using the function representation above, the problem is \[ \max_{\substack{ a_{ij}\in\F_{64}\\ -1\leq j\leq7,\; L_j\leq i\leq U_j}} \left| \left\{ m\in[511]: \left( \sum_{j=-1}^{7} \sum_{i=L_j}^{U_j} a_{ij}x^iy^j \right)(P_m) \in F_m \right\} \right|;\] equivalently, \[ \max_{f\in\mathcal L(7P_{00}+82P_\infty)} \left| \left\{ m\in[511]:f(P_m)\in F_m \right\} \right|. \] Thus, the variables in the COFI instance are the $62$ coefficients $a_{ij}$ specifying a rational function \[ f(x,y) = \sum_{j=-1}^{7} \sum_{i=L_j}^{U_j} a_{ij}x^iy^j \] on the Hermitian curve. In particular, unlike OPI, the admissible functions are not restricted to polynomials: the $j=-1$ portion of the basis consists of the rational functions \[ \frac{x^2}{y}, \frac{x^3}{y}, \ldots, \frac{x^8}{y}. \]
\end{example}

We note that the dual code used for decoding in DQI is $C^\perp=C_\Omega(D,G).$
For the choice $G=7P_{00}+82P_\infty,$ \cite{matthews2001weierstrass} shows that $C^\perp = C_{\Omega}(D, 7P_{00} + 82P_{\infty})$
has parameters $[511,449,\geq38].$ 

We conclude this section by illustrating the satisfaction fraction advantages obtained from Suzuki and extended norm--trace codes, thereby demonstrating Corollary \ref{Suzuki} and Corollary \ref{ent}.

\begin{example}

For the DQI computation, we parameterize the dual code being decoded in this example as $C^\perp=C_{\mathcal L}(D,\gamma P_\infty), \gamma\in H(P_\infty).$ For $q=8$ and $n=64$, this code has parameters $[64,\ell(\gamma P_\infty),d^\perp], d^\perp\geq 64-\gamma,$ where exact values of $d^\perp$ are used when known. The corresponding decoding radius is \[ \ell=\left\lfloor\frac{d^\perp-1}{2}\right\rfloor. \] Recall that $C_{\mathcal L}(D,\alpha P_\infty)^\perp = C_{\mathcal L}\left( D, \left(2q_0(q-1)+q^2-2-\alpha\right)P_\infty \right).$ Thus, the parameter $\gamma$ in Table~\ref{tab:suzuki_q8_threshold} refers directly to the one-point Suzuki code $C^\perp$ used in the DQI decoding step.

\end{example}

\small
\setlength{\tabcolsep}{2.5pt}
\setlength{\LTleft}{0pt}
\setlength{\LTright}{0pt}

\begin{longtable}{c c c c c c c c c c}

\toprule
& & & \multicolumn{7}{c}{Lower bound on $s/n$} \\
\cmidrule(lr){4-10}
$\gamma$ & $d^\perp$ & $\ell$
& $r=1$ & $r=2$ & $r=3$ & $r=4$ & $r=5$ & $r=6$ & $r=7$ \\
\midrule
\endfirsthead

\toprule
& & & \multicolumn{7}{c}{Lower bound on $s/n$} \\
\cmidrule(lr){4-10}
$\gamma$ & $d$ & $\ell$
& $r=1$ & $r=2$ & $r=3$ & $r=4$ & $r=5$ & $r=6$ & $r=7$ \\
\midrule
\endhead

\midrule
\multicolumn{10}{r}{\emph{Continued on next page}}\\
\midrule
\endfoot

\endlastfoot

\endlastfoot

27  & $38^*$& $18$ & 0.633326 & 0.779998 & 0.880645 & 0.949609 & 0.990020 & 1.000000 & 1.000000 \\
28      & 36    & $17$ & 0.616353 & 0.765306 & 0.869047 & 0.941665 & 0.986234 & 1.000000 & 1.000000 \\
29      & 35    & $17$ & 0.616353 & 0.765306 & 0.869047 & 0.941665 & 0.986234 & 1.000000 & 1.000000 \\
30      & 34    & $16$ & 0.598911 & 0.750000 & 0.856763 & 0.933013 & 0.981763 & 1.000000 & 1.000000 \\
31      & 33    & $16$ & 0.598911 & 0.750000 & 0.856763 & 0.933013 & 0.981763 & 1.000000 & 1.000000 \\
32      & 32    & $15$ & 0.580971 & 0.734042 & 0.843750 & 0.923608 & 0.976562 & 0.999667 & 1.000000 \\
33      & 31    & $15$ & 0.580971 & 0.734042 & 0.843750 & 0.923608 & 0.976562 & 0.999667 & 1.000000 \\
34      & 30    & $14$ & 0.562500 & 0.717389 & 0.829959 & 0.913399 & 0.970584 & 0.998639 & 1.000000 \\
35      & 29    & $14$ & 0.562500 & 0.717389 & 0.829959 & 0.913399 & 0.970584 & 0.998639 & 1.000000 \\
36--37  & $28^*$& $13$ & 0.543457 & 0.699986 & 0.815331 & 0.902325 & 0.963768 & 0.996861 & 1.000000 \\
38      & 26    & $12$ & 0.523792 & 0.681770 & 0.799793 & 0.890312 & 0.956043 & 0.994270 & 1.000000 \\
39      & 25    & $12$ & 0.523792 & 0.681770 & 0.799793 & 0.890312 & 0.956043 & 0.994270 & 1.000000 \\
40--41  & $24^*$& $11$ & 0.503448 & 0.662664 & 0.783261 & 0.877272 & 0.947323 & 0.990789 & 1.000000 \\
42  & $22$    & $10$ & 0.482350 & 0.642572 & 0.765625 & 0.863092 & 0.937500 & 0.986322 & 1.000000 \\
43  & $21$    & $10$ & 0.482350 & 0.642572 & 0.765625 & 0.863092 & 0.937500 & 0.986322 & 1.000000 \\
44--45  & $20^*$& $9$  & 0.460407 & 0.621373 & 0.746752 & 0.847634 & 0.926439 & 0.980748 & 1.000000 \\
46      & 18    & $8$  & 0.437500 & 0.598911 & 0.726467 & 0.830719 & 0.913967 & 0.973911 & 1.000000 \\
47      & $17^*$& $8$  & 0.437500 & 0.598911 & 0.726467 & 0.830719 & 0.913967 & 0.973911 & 1.000000
\\
48--50  & $16^*$& $7$  & 0.413472 & 0.574982 & 0.704542 & 0.812109 & 0.899855 & 0.965607 & 0.999410 \\
51  & $13$    & $6$  & 0.388109 & 0.549305 & 0.680662 & 0.791481 & 0.883787 & 0.955555 & 0.997484 \\
52 & 12 & $5$ & 0.361103 & 0.521476 & 0.654378 & 0.768368 & 0.865315 & 0.943351 & 0.993915 \\
53--55  & $12^*$& $5$  & 0.361103 & 0.521476 & 0.654378 & 0.768368 & 0.865315 & 0.943351 & 0.993915 \\
56      & 8     & $3$  & 0.299965 & 0.456490 & 0.591378 & 0.711371 & 0.817940 & 0.909615 & 0.979653 \\
57--63  & $8^*$ & $3$  & 0.299965 & 0.456490 & 0.591378 & 0.711371 & 0.817940 & 0.909615 & 0.979653 \\
\multicolumn{10}{l}{\footnotesize{$^*$ denotes exact minimum distance}}\\
\bottomrule 
\caption{Lower bounds on $s/n$ for one-point Suzuki codes with $q=8$ and $n=64$, with exact values of minimum distance denoted by stars. For each $r=1,\ldots,7$, the lower bound is $\left( \sqrt{\frac{r}{8}\left(1-\frac{\ell}{64}\right)} + \sqrt{\frac{\ell}{64}\left(1-\frac{r}{8}\right)} \right)^2$ when $\frac{r}{8}\leq 1-\frac{\ell}{64}$, and is equal to $1$ otherwise.}\label{tab:suzuki_q8_threshold}
\end{longtable}

\begin{example}
We now consider DQI using extended norm--trace codes. As in the Suzuki case, we parameterize the dual code being decoded as $C^\perp=C_{\mathcal L}(D,\gamma P_\infty).$ Recall that \[ C_{\mathcal L}(D,\alpha P_\infty)^\perp = C_{\mathcal L}\left( D, \left(u(q^t-1)-1-\alpha\right)P_\infty \right). \] Thus, the parameter $\gamma$ below refers directly to the one-point extended norm--trace code $C^\perp$ used in the DQI decoding step. Table \ref{tab:ent_q8_t3_threshold} lists parameter choices for which the lower bound on the DQI satisfaction fraction exceeds \(0.933\), with \(t=3\), \(q=8\), and \(r/8^3=1/2\), so that \(r=256\). Here, $\frac{8^3 -1 }{8-1} = 73$, so $u =1$ or $u=73$. 
\end{example}

\small
\setlength{\tabcolsep}{4pt}
\setlength{\LTleft}{\fill}
\setlength{\LTright}{\fill}

\begin{longtable}{c c c c c}

\toprule
$u$ & $\gamma$ & $d^\perp$ & $\ell/n$ & $s/n$ \\
\midrule
\endfirsthead

\toprule
$u$ & $\gamma$ & $d^\perp$ & $\ell/n$ & $s/n$ \\
\midrule
\endhead

\midrule
\multicolumn{5}{r}{\emph{Continued on next page}}\\
\midrule
\endfoot

\endlastfoot

1  & 64   & 448 & 0.4355 & 0.9958 \\
1  & 96   & 416 & 0.4043 & 0.9908 \\
1  & 128  & 384 & 0.3730 & 0.9836 \\
1  & 160  & 352 & 0.3418 & 0.9743 \\
1  & 192  & 320 & 0.3105 & 0.9627 \\
1  & 224  & 288 & 0.2793 & 0.9487 \\
1  & 251  & 261 & 0.2539 & 0.9352 \\
\midrule

73 & 6144  & 26624 & 0.4062 & 0.9911 \\
73 & 8192  & 24576 & 0.3750 & 0.9841 \\
73 & 10240 & 22528 & 0.3437 & 0.9749 \\
73 & 12288 & 20480 & 0.3125 & 0.9635 \\
73 & 14336 & 18432 & 0.2812 & 0.9496 \\
73 & 16056 & 16712 & 0.2550 & 0.9358 \\
\bottomrule
\caption{Examples of the lower bound on $s/n$ for extended norm--trace codes with $q=8$, $t=3$, and $r/8^3=1/2$ (that is, $r=256$) for which $s/n>0.933$. Here $n=64+448u$, $d^\perp \ge n-\gamma$, and $\ell= \lfloor(d^\perp-1)/2\rfloor \geq \left \lfloor \frac{n-\gamma - 1}{2} \right \rfloor$.}\label{tab:ent_q8_t3_threshold}
\end{longtable}

For $q=8$, $t=3$, and $r/8^3=1/2$, the condition $s/n>0.933$ holds for a broad range of pole orders,  up to $\gamma/n<1/2$. Table~\ref{tab:ent_q8_t3_threshold} illustrates this for $u=1$ and $u=73$.

\section{Comparisons}\label{sec:families}

\subsection{Comparisons across Families of Curves}

In this section, we compare the DQI satisfaction fractions arising from several AG-code families with those arising from one-point Hermitian codes. We also provide examples for Suzuki codes, extended norm-trace codes, and two-point Hermitian codes. We now compare COFI using the extended norm--trace curve with \(t=2\) to HOPI using the Hermitian curve.

\begin{proposition}[Extended Norm-Trace Codes versus One-Point Hermitian Codes]\label{prop:ent_versus_herm} Let $C_H = C_{\mathcal{L}}(D_H,\alpha_HP_{\infty})$, $D_H = P_1 + \ldots + P_{q^3}$ be a one-point Hermitian code over $\F_{q^2}$ and $C_{ENT} = C_{\mathcal{L}}(D_{ENT}, \alpha_{ENT}P_{\infty})$, $D_{ENT} = P_1 + \ldots + P_{q+u(q^2-q)}$ be a one-point code over the extended norm--trace curve $y^q + y = x^u$ over $\F_{q^2}$, where $u \mid q+1$ and $u \neq q+1$. If $\alpha_{ENT}\ge 2g_{ENT}+2\left\lfloor \frac{n_{ENT}}{n_H}\ell_H\right\rfloor+1,$
then $\frac{\ell_{ENT}}{n_{ENT}}>\frac{\ell_H}{n_H}$ and the expected satisfaction fraction for COFI with the extended norm--trace code is greater than the HOPI satisfaction fraction obtained from the Hermitian code, meaning For a fixed $\rho$, this yields a strictly larger satisfaction fraction whenever both comparisons lie in the nonsaturated regime; if both are saturated, both fractions equal 1.
\end{proposition}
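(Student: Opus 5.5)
The plan has two parts. First, turn the hypothesis on $\alpha_{ENT}$ into the inequality $\ell_{ENT}/n_{ENT}>\ell_H/n_H$ using the Goppa-type bound from Theorem~\ref{thm:main}. Second, transfer that inequality to the satisfaction fractions through monotonicity of the semicircle law from \cite{jordan2025dqi_nature}.

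For the first step, write $m:=\left\lfloor \frac{n_{ENT}}{n_H}\ell_H\right\rfloor$. Since $\alpha_{ENT}\ge 2g_{ENT}+2m+1>2g_{ENT}-2$, Theorem~\ref{thm:main} (in its specialization, Corollary~\ref{ent} with $t=2$) gives
\[
\ell_{ENT}\ \ge\ l(\alpha_{ENT}P_\infty,g_{ENT})=\left\lfloor\frac{\alpha_{ENT}-2g_{ENT}+1}{2}\right\rfloor\ \ge\ \left\lfloor\frac{2m+2}{2}\right\rfloor=m+1 .
\]
Because $m+1>\frac{n_{ENT}}{n_H}\ell_H$ by the definition of the floor, dividing by $n_{ENT}$ yields $\frac{\ell_{ENT}}{n_{ENT}}>\frac{\ell_H}{n_H}$. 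Here $\ell_H$ is the actual Hermitian radius, which is exactly known by \cite{Jordan_Gu_AG_DQI_25}, so no bound on the Hermitian side is needed. One sanity check is that $\alpha_{ENT}<n_{ENT}$, as Theorem~\ref{thm:main} requires. I will state this as implicit in the setup; if it fails, $\ell_{ENT}$ is still at least the stated Goppa value because the dual code is an AG code with designed distance $\alpha_{ENT}-2g_{ENT}+2$.

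For the second step, fix $\rho=r/q^2$; both codes live over $\F_{q^2}$, so the alphabet is the same. Set $\lambda=\ell/n$ and
\[
\phi(\lambda)=\left(\sqrt{\lambda(1-\rho)}+\sqrt{\rho(1-\lambda)}\right)^2 \quad\text{for } \lambda\le 1-\rho,
\]
with $\phi(\lambda)=1$ otherwise. I will show that $\phi$ is strictly increasing on $[0,1-\rho]$ and equals $1$ at $\lambda=1-\rho$. The cleanest route is the substitution $\lambda=\sin^2\theta$ and $\rho=\cos^2\psi$, which gives $\phi=\sin^2(\theta+\psi)$. As $\theta$ runs from $0$ to $\pi/2-\psi$, the angle $\theta+\psi$ increases to $\pi/2$, so $\phi$ is strictly increasing and $\phi(1-\rho)=1$. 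Alternatively, differentiate and check that $\phi'>0$ exactly when $\lambda<1-\rho$.

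Combining the two steps, $\lambda_{ENT}>\lambda_H$ gives $\phi(\lambda_{ENT})\ge\phi(\lambda_H)$ always. The inequality is strict whenever $\lambda_H<1-\rho$, that is, whenever the Hermitian comparison is nonsaturated. If both are saturated, both values equal $1$.

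The main obstacle is conceptual rather than computational. The ENT fraction is only bounded below via $l(G,g)$, so the comparison has to be phrased as guaranteed (lower-bound) ENT satisfaction versus exact Hermitian satisfaction. I would make this explicit by applying $\phi$ to the lower bound $(m+1)/n_{ENT}$ and using monotonicity twice.
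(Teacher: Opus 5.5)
Your first step is essentially the paper's proof. It reduces the claim to $\ell_{ENT}\ge\lfloor\frac{n_{ENT}}{n_H}\ell_H\rfloor+1$ and obtains this from $\ell_{ENT}\ge\lfloor\frac{\alpha_{ENT}-2g_{ENT}+1}{2}\rfloor$ together with the hypothesis on $\alpha_{ENT}$. The paper stops there and takes monotonicity of the semicircle law for granted, so your second step is a correct addition. That includes your slightly sharper observation that strictness only needs the Hermitian instance to be nonsaturated.

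One slip: with $\rho=\cos^2\psi$ the substitution gives $\phi=\cos^2(\theta-\psi)$, saturating at $\theta=\psi$. To get $\sin^2(\theta+\psi)$ on $\theta\in[0,\pi/2-\psi]$ you need $\rho=\sin^2\psi$. Either corrected version, or your derivative check, gives the strict monotonicity you use.
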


\begin{proof}
Notice that $\frac{\ell_{ENT}}{n_{ENT}}>\frac{\ell_H}{n_H}$ is equivalent to $\ell_{ENT}>\frac{n_{ENT}}{n_H}\ell_H.$ Since $\ell_{ENT}$ is an integer, it is enough to require $\ell_{ENT}\ge \left\lfloor \frac{n_{ENT}}{n_H}\ell_H\right\rfloor+1.$
Now $\ell_{ENT}= \left\lfloor \frac{d_{ENT}^\perp-1}{2}\right\rfloor
\ge
\left\lfloor \frac{\alpha_{ENT}+1-2g_{ENT}}{2}\right\rfloor.$
Therefore, it suffices that
\[
\left\lfloor \frac{\alpha_{ENT}+1-2g_{ENT}}{2}\right\rfloor
\ge
\left\lfloor \frac{n_{ENT}}{n_H}\ell_H\right\rfloor+1.
\]
A sufficient condition for this is $\alpha_{ENT}+1-2g_{ENT}\ge 2\left\lfloor \frac{n_{ENT}}{n_H}\ell_H\right\rfloor+2$ which is equivalent to $\alpha_{ENT}\ge 2g_{ENT}+2\left\lfloor \frac{n_{ENT}}{n_H}\ell_H\right\rfloor+1.$
Hence $\frac{\ell_{ENT}}{n_{ENT}}>\frac{\ell_H}{n_H}$ as desired.
\end{proof}

We now provide an example demonstrating Proposition \ref{prop:ent_versus_herm}. 

\begin{example} Let $q = 9$ and $u = 5$ so that $u \mid q+1$. Then, $g_H = \frac{9\cdot 8}{2} = 36, n_H = 9^3 = 729, g_{ENT} = \frac{8 \cdot 4}{2} = 16, n_{ENT} = 9(5\cdot8 +1) = 369.$ Let $\alpha_H = 100$. In this case, the exact dual minimum distance of the one-point Hermitian code attains the Goppa designed distance, therefore $d_H^\perp=\alpha_H-2g_H+2
=100-2(36)+2=30.$
Hence,
$$
\ell_H
=\left\lfloor\frac{d_H^\perp-1}{2}\right\rfloor
=\left\lfloor\frac{29}{2}\right\rfloor
=14.
$$

For the extended norm--trace code, the Goppa bound gives

$$
\ell_{ENT}\geq
\left\lfloor
\frac{\alpha_{ENT}-2g_{ENT}+1}{2}
\right\rfloor.
$$

Now choose $\alpha_{ENT} \ge 2g_{ENT}+2\left\lfloor \frac{n_{ENT}}{n_H}\ell_H\right\rfloor+1
= 32 + 2 \cdot 7 +1
= 47.$ Let $\alpha_{ENT} = 47$. Then, $\ell_{ENT} \geq 8.$ Thus, $\frac{\ell_H}{n_H} = \frac{14}{729} \approx 0.01920,
\quad
\frac{\ell_{ENT}}{n_{ENT}} \geq \frac{8}{369} \approx 0.02168.$ The resulting expected satisfaction fractions are

\small
\setlength{\tabcolsep}{4pt}
\setlength{\LTleft}{\fill}
\setlength{\LTright}{\fill}

\begin{longtable}{c c c}

\toprule
$r$ & $\frac{s_H}{n_H}$ & Lower bound on $\frac{s_{ENT}}{n_{ENT}}$\\
\midrule
\endfirsthead

\toprule
$r$ & $\frac{s_H}{n_H}$ & Lower bound on $\frac{s_{ENT}}{n_{ENT}}$ \\
\midrule
\endhead

\midrule
\multicolumn{3}{r}{\emph{Continued on next page}}\\
\midrule
\endfoot

\endlastfoot

1  & 0.061 & 0.066 \\
9  & 0.212 & 0.220 \\
18 & 0.347 & 0.355 \\
27 & 0.469 & 0.478 \\
36 & 0.583 & 0.592 \\
45 & 0.690 & 0.698 \\
54 & 0.790 & 0.797 \\
63 & 0.881 & 0.887 \\

\caption{Comparison of COFI with the extended norm--trace curve $y^9+y=x^5$ and HOPI with the Hermitian curve $y^9+y=x^{10}$, both over $\F_{81}$.}
\label{tab:ent_hermitian_q9}

\end{longtable}

\end{example}

We now have the following result, demonstrating that 
Suzuki-code DQI yields a larger satisfaction-fraction bound than Hermitian-code DQI at equal block length  ($n=q^6$), under the hypotheses below.

\begin{proposition}[Suzuki versus One-Point Hermitian Codes]\label{thm:suz_versus_herm_same_length}
Let $q = 2^{2m+1}$ with $m \ge 1$. Let $C_H = C_{\mathcal{L}}(D_H,G_H)$, $G_H = m_H P_{\infty}$ be a one-point Hermitian code over $\F_{q^4}$ and $C_S = C_{\mathcal{L}}(D_S,G_S)$, $G_S = m_S P_{\infty}$ a one-point Suzuki code over $\F_{q^3}$, with $m_H > 2g_H - 2$ and $m_S > 2g_S - 2$ and both scaled to the same length $n = q^6$. The Hermitian curve over $\F_{q^4}=\F_{(q^2)^2}$ has genus $g_H=\frac{q^2(q^2-1)}{2}.$ Moreover, since $q^3=2^{6m+3}=2^{2(3m+1)+1},$ the Suzuki curve over $\F_{q^3}$ has parameter $q_{0,S}=2^{3m+1}$
and genus $g_S=q_{0,S}(q^3-1)
=2^{3m+1}(q^3-1).$ For DQI, decoding is performed on the dual codes $C_H^\perp$ and $C_S^\perp$. The dual distance of $C_S^\perp$ satisfies $d_S^\perp \ge m_S + 2 - 2g_S,$ and the dual distance of $C_H^\perp$ is $d_H^{\perp}$ and for one-point Hermitian codes is known. The DQI parameters are
\[
\ell_H = \left\lfloor \frac{d_H^{\perp} - 1}{2}\right\rfloor ,
\qquad
\ell_S \geq \left\lfloor \frac{m_S - 2g_S + 1}{2} \right\rfloor.
\]

If $m_S \geq 2g_S + 1 + 2\ell_H$ then, $\frac{\ell_S}{n} > \frac{\ell_H}{n}$ and the expected satisfaction fraction for DQI is higher for the Suzuki code than for the Hermitian code for these choices.
\end{proposition}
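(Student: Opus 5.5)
The argument mirrors the proof of Proposition~\ref{prop:ent_versus_herm}, and is simpler because both codes have the same length $n=q^6$. First I check that the lengths agree. The Hermitian curve over $\F_{q^4}=\F_{(q^2)^2}$ has $(q^2)^3+1=q^6+1$ rational points, so its one-point code has $n_H=q^6$. The Suzuki curve over $\F_{q^3}$ has $(q^3)^2+1$ rational points, so $n_S=q^6$ as well. Since the lengths are equal, the comparison $\frac{\ell_S}{n}>\frac{\ell_H}{n}$ is simply the integer inequality $\ell_S>\ell_H$, that is, $\ell_S\ge \ell_H+1$.

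Next I bound $\ell_S$ from below using the Goppa designed distance of the dual one-point Suzuki code, as in Theorem~\ref{thm:main} and Corollary~\ref{Suzuki}. These apply because $m_S>2g_S-2$, and I also need $m_S<n$ so the evaluation map is injective; I will state this as implicit in the hypotheses. This gives
\[
\ell_S\ \ge\ \left\lfloor \frac{m_S-2g_S+1}{2}\right\rfloor .
\]
Using the hypothesis $m_S\ge 2g_S+1+2\ell_H$, we get $m_S-2g_S+1\ge 2\ell_H+2$. Hence
\[
\left\lfloor \frac{m_S-2g_S+1}{2}\right\rfloor\ \ge\ \ell_H+1,
\]
and so $\ell_S\ge \ell_H+1>\ell_H$. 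The Hermitian parameter $\ell_H$ is an exact integer, because the one-point Hermitian dual distances are known. So no floor-of-a-ratio issue arises, unlike in Proposition~\ref{prop:ent_versus_herm}.

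Finally, I pass from the decoding radii to the satisfaction fractions. Both fields contain subfields of size $q^3$ and $q^4$, and the alphabets differ, so I would compare at a fixed ratio $\rho=r/q_{\text{field}}$, exactly as Proposition~\ref{prop:ent_versus_herm} does. The semi-circle expression is
\[
\phi(\lambda)=\left(\sqrt{\lambda(1-\rho)}+\sqrt{\rho(1-\lambda)}\right)^2 .
\]
I will show it is strictly increasing in $\lambda=\ell/n$ on the nonsaturated region $\rho\le 1-\lambda$. Writing $\lambda=\sin^2\theta$ and $\rho=\sin^2\psi$ gives $\phi=\sin^2(\theta+\psi)$, and this is increasing as long as $\theta+\psi\le\pi/2$, which is exactly the condition $\rho\le1-\lambda$. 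When the Suzuki instance enters saturation, its value is $1$, which is at least the Hermitian value; if both are saturated, both equal $1$.

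The main obstacle is not the inequality itself but stating the comparison honestly. The Suzuki side is only a lower bound, so the conclusion is that the guaranteed Suzuki satisfaction fraction $\phi(l(G_S,g_S)/n)$ exceeds the Hermitian value $\phi(\ell_H/n)$. This requires the monotonicity argument above to be applied to the lower bound $l(G_S,g_S)/n$ rather than to the unknown $\ell_S/n$. It also requires handling the mixed case carefully: Hermitian nonsaturated and Suzuki saturated is fine, while the reverse cannot occur because $\lambda_S>\lambda_H$.
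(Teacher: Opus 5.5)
Your proof is correct and follows the paper's argument. Equal length $n=q^6$ reduces the claim to $\ell_S>\ell_H$, and the Goppa bound combined with $m_S\ge 2g_S+1+2\ell_H$ gives $\lfloor (m_S-2g_S+1)/2\rfloor\ge \ell_H+1$; your direction of implication is cleaner than the paper's displayed step $\lfloor x\rfloor>\ell_H\Rightarrow x>\ell_H+1$, which is not valid as written. Your monotonicity argument via $\phi=\sin^2(\theta+\psi)$ and the saturation case analysis supply what the paper only asserts, but the remark that both fields contain subfields of sizes $q^3$ and $q^4$ is false and should just say the two alphabets have these sizes.
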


\begin{proof}
Because both codes have identical length $n = q^6$, the inequality $\frac{\ell_S}{n} > \frac{\ell_H}{n}$
is equivalent to $\ell_S > \ell_H$. Using that $\ell_S \geq \left\lfloor \frac{m_S - 2g_S + 1}{2} \right\rfloor,$ $\ell_S > \ell_H$ holds whenever 
\[
\left \lfloor \frac{m_S - 2g_S + 1}{2}\right \rfloor > \ell_H \implies \frac{m_S - 2g_S + 1}{2} > \ell_H + 1.
\]
Solving for $m_S$ gives $m_S \geq 2g_S + 2\ell_H + 1,$
which proves the claim.
\end{proof}

\begin{remark} If $\ell_S = \left\lfloor \frac{m_S - 2g_S + 1}{2} \right\rfloor$, then the above statement is an if and only if. 
\end{remark}

We now have an example comparing the satisfaction fraction for Suzuki versus Hermitian over $q=8$, demonstrating Proposition \ref{thm:suz_versus_herm_same_length}.

\begin{example} Consider a one-point Hermitian code over $\F_{8^4}=\F_{4096}$ and a one-point Suzuki code over $\F_{8^3}=\F_{512}$. For the Hermitian curve, the corresponding Hermitian parameter is $64$, so $g_H=\frac{64(64-1)}{2}=2016, n_H=64^3=262144.$
For the Suzuki curve, we have $512=2^{2\cdot4+1}, q_0=2^4=16$,
and hence $g_S=q_0(512-1)=16(511)=8176,
n_S=512^2=262144.$ Thus, the two codes have the same length $n:=n_H=n_S=262144.$ Let $m_H=131000.$ For this one-point Hermitian code, $d_H^\perp=m_H-2g_H+2 =131000-2(2016)+2 =126970$
and
\[
\ell_H
=\left\lfloor\frac{d_H^\perp-1}{2}\right\rfloor
=\left\lfloor\frac{126969}{2}\right\rfloor
=63484.
\]
Now choose $m_S$ minimally so that $\ell_S>\ell_H$ gives $m_S=2g_S+2\ell_H+1=2(8176)+2(63484) + 1 = 143321$ and therefore \[\ell_S\geq \left\lfloor \frac{m_S-2g_S+1}{2}\right\rfloor
=\left\lfloor \frac{143321-2(8176)+1}{2}\right\rfloor
=63485.
\] Thus, $\frac{\ell_H}{n}=\frac{63484}{262144}\approx 0.2421722, \frac{\ell_S}{n}\geq \frac{63485}{262144}\approx 0.2421761$ and therefore $\frac{\ell_S}{n}>\frac{\ell_H}{n}.$ Now match $r$ by
\[
\frac{r_H}{q^4}=\frac{r_S}{q^3}
\quad\Longleftrightarrow\quad
\frac{r_H}{4096}=\frac{r_S}{512}
\quad\Longleftrightarrow\quad
r_H=8r_S.
\]

Let $\rho=\frac{r_S}{512}=\frac{r_H}{4096}.$
Using $\ell_S \geq 63485$, $\ell_H = 63484$ and 
\[
\frac{s}{n}
=
\left(
\sqrt{\frac{\ell}{n}(1-\rho)}
+
\sqrt{\rho\left(1-\frac{\ell}{n}\right)}
\right)^2,
\] we obtain 

\small
\setlength{\tabcolsep}{4pt}
\setlength{\LTleft}{\fill}
\setlength{\LTright}{\fill}

\begin{longtable}{c c c c}
\toprule
$r_S$ & $r_H$
& Lower bound on $\frac{s_S}{n}$
& $\frac{s_H}{n}$ \\
\midrule
\endfirsthead

\toprule
$r_S$ & $r_H$
& Lower bound on $\frac{s_S}{n}$
& $\frac{s_H}{n}$ \\
\midrule
\endhead
\midrule
\multicolumn{4}{r}{\emph{Continued on next page}}\\
\midrule
\endfoot
\endlastfoot
4 & 32 & 0.32163934 & 0.32163518 \\
32 & 256 & 0.48180245 & 0.48179800 \\
64 & 512 & 0.58999220 & 0.58998782 \\
128 & 1024 & 0.74209357 & 0.74208967 \\
192 & 1536 & 0.85034082 & 0.85033764 \\
256 & 2048 & 0.92840030 & 0.92839800 \\
320 & 2560 & 0.97925279 & 0.97925152 \\
\caption{Comparison of the DQI satisfaction bounds for the Suzuki and Hermitian codes at equal block length and matched constraint density $\rho=r_S/512=r_H/4096$.}
\label{tab}
\end{longtable}
Since $\ell_S/n>\ell_H/n$, the Suzuki code yields a strictly larger guaranteed lower bound on the expected fraction of satisfied constraints for each of the matched constraint densities shown above.
\end{example}

Lastly, we compare two-point Hermitian codes to one-point Hermitian codes in this setting. Let $\mathcal{X}/\F_{q^2}$ be the Hermitian curve and let $C=C_{\mathcal L}(D,G)$ be a two-point Hermitian code as above and suppose that $G$ satisfies
the Weierstrass-pair conditions of \cite[Theorem 4.1]{matthews2001weierstrass}.
Then $d(C^\perp)
=
d\bigl(C_{\Omega}(D,G)\bigr)
\ge
\deg(G)-2g+4.$ Under \cite[Theorem 4.4]{matthews2001weierstrass}, we have that $d(C^\perp)
\ge \deg(G)-2g+5.$ 

We now have the following result. 

\begin{proposition}[Two-point Hermitian Codes versus One-Point Hermitian Codes]

Let $C_{2pt} = C_{\Omega}(D, G)$, where $G = \alpha Q_{1}+ \beta Q_2$, be the dual of a two point Hermitian code, where $C_{2pt}$ satisfies \cite[Theorem 4.4]{matthews2001weierstrass}. Let $C_H$ be the dual of a one-point code Hermitian code of the same dimension. Suppose that
\[
\deg(G)=2g+q^2-aq-b-3,
3\leq a<b\leq q-1
\] as in \cite[Proposition 4.5]{matthews2001weierstrass}. Let $\ell_H = \left \lfloor \frac{d_H-1}{2}\right \rfloor$, $\ell_{2pt} = \left \lfloor \frac{d_{2pt}-1}{2}\right \rfloor.$ Then $\ell_{\mathrm{2pt}}\geq \ell_H+1$. Consequently, $\frac{\ell_{\mathrm{2pt}}}{n_{\mathrm{2pt}}}
>
\frac{\ell_H}{n_H},$
and hence the two-point Hermitian code yields a strictly larger expected DQI satisfaction fraction.

\end{proposition}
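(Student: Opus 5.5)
The plan is to compare the two dual minimum distances directly through the degree of the divisor. I will use the bound stated just before the proposition: under \cite[Theorem 4.4]{matthews2001weierstrass}, $d_{2pt}:=d(C_{2pt})\geq \deg(G)-2g+5$. For the one-point side, let $C_H=C_\Omega(D_H,\alpha_H P_\infty)$ have the same dimension as $C_{2pt}$. The dimension of $C_\Omega(D,G)$ is $n-\ell(G)=n-\deg(G)-1+g$ when $2g-2<\deg G<n$. Since $n_{2pt}=q^3-1$ and $n_H=q^3$, equal dimension gives $\alpha_H=\deg(G)+1$.

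Next I need an exact value, or at least a tight upper bound, for $d_H$. The exact minimum distances of one-point Hermitian codes are known (Yang--Kumar, Stichtenoth), which is the fact already used in the paper to obtain equality in \cite[Theorem 1]{Jordan_Gu_AG_DQI_25}. In the range $\alpha_H=2g+q^2-aq-b-2$ with $3\le a<b\le q-1$, I will invoke \cite[Proposition 4.5]{matthews2001weierstrass}. As I read it, that result is precisely the comparison showing that in this regime the one-point Hermitian dual code has distance at most $\deg(G)-2g+3$ (equivalently $\alpha_H-2g+2$ plus a small correction), while the two-point code attains $\deg(G)-2g+5$. Then $d_{2pt}\geq d_H+2$. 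Since $\lfloor (x+2-1)/2\rfloor=\lfloor (x-1)/2\rfloor+1$, this gives $\ell_{2pt}\geq \ell_H+1$.

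Finally, $\frac{\ell_{2pt}}{q^3-1}\geq\frac{\ell_H+1}{q^3-1}>\frac{\ell_H}{q^3}$ is immediate. The satisfaction fraction in Theorem~\ref{thm:main} is strictly increasing in $\ell/n$ on the nonsaturated region $\rho\le 1-\ell/n$, because the semicircle expression increases in $\ell/n$ for $\ell/n\le 1-\rho$. So the strict improvement holds whenever both instances are nonsaturated, and both fractions equal $1$ when both are saturated, matching the convention of Proposition~\ref{prop:ent_versus_herm}.

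The main obstacle is the second step: pinning down $d_H\le \deg(G)-2g+3$ for the equal-dimension one-point code. My first attempt will be to quote \cite[Proposition 4.5]{matthews2001weierstrass} directly, since its stated purpose is this comparison. Failing that, I will use the explicit one-point Hermitian distance formula: for $\alpha_H$ in this window, it places $d_H$ at the Goppa bound plus at most one. I will then verify the gap of two against the bound from Theorem 4.4.
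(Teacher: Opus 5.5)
Your argument is the paper's proof: Theorem~4.4 of \cite{matthews2001weierstrass} gives $d_{2pt}\ge \deg(G)-2g+5$, and the equal-dimension one-point dual code has $d_H=q^2-aq-b=\deg(G)-2g+3$. Your matching $\alpha_H=\deg(G)+1$ is a step the paper leaves implicit, and the gap of two together with $q^3-1<q^3$ finishes exactly as in the paper.

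One caution. This $d_H$ is exactly the one-point Goppa bound $\alpha_H-2g+2$ (attained because $a<b$), so there is no room for the ``small correction'' or the ``Goppa bound plus at most one'' in your fallback. With a slack of one you would only get $d_{2pt}\ge d_H+1$, which does not give $\ell_{2pt}\ge\ell_H+1$ when $d_H$ is odd.
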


\begin{proof} By~\cite[Theorem~4.4]{matthews2001weierstrass}, the minimum distance of
the two-point code $C_{\Omega}(D, G)$, $G = \alpha P_1 + \beta P_2$ satisfies $d_{\mathrm{2pt}}
\geq
\deg(G)-2g+5.$
Using the assumption $\deg(G)=2g+q^2-aq-b-3$,
we obtain $d_{\mathrm{2pt}}
\geq
q^2-aq-b+2.$ For the corresponding one-point Hermitian code of the same dimension, the dual minimum distance is $d_H=q^2-aq-b.$
Hence, we have that $d_{\mathrm{2pt}}\geq d_H+2.$
Therefore,
\[
\ell_{\mathrm{2pt}}
=
\left\lfloor
\frac{d_{\mathrm{2pt}}-1}{2}
\right\rfloor
\geq
\left\lfloor
\frac{d_H+1}{2}
\right\rfloor.
\]
Since $d_H$ is an integer, $\left\lfloor
\frac{d_H+1}{2}
\right\rfloor
=
\left\lfloor
\frac{d_H-1}{2}
\right\rfloor+1
=
\ell_H+1.$
Thus, $\ell_{\mathrm{2pt}}\geq \ell_H+1.$ Finally, $n_{\mathrm{2pt}}=q^3-1<q^3=n_H.$
Combining gives $\frac{\ell_{\mathrm{2pt}}}{n_{\mathrm{2pt}}}
>
\frac{\ell_H}{n_H}$ and therefore the two-point Hermitian code gives a strictly larger expected satisfaction fraction than the one-point Hermitian code of the same dimension fraction.
\end{proof}

The following example uses the codes from \cite[Example 5.3]{matthews2001weierstrass}

\begin{example}
Consider the Hermitian curve $y^8+y=x^9$
over $\F_{64}$. Then $g=28$. Let $G=7P_1+82P_2$
so that $\deg(G)=89.$
The two-point code $C_{2pt} = C_{\Omega}(D,7P_1+82P_2)$ has parameters $[511,449,\geq 38].$
From \cite[Theorem 4.4]{matthews2001weierstrass}, we have that $d_{\mathrm{2pt}}
\geq
\deg(G)-2g+5
=
89-56+5
=
38.$
The one-point Hermitian code $C_H$ of the same dimension has parameters $[512,449,36]$.
Thus, $d_{\mathrm{2pt}}\geq 38>36=d_H$ and
\[
\ell_{\mathrm{2pt}}
\geq
\left\lfloor\frac{38-1}{2}\right\rfloor
=18,
\qquad
\ell_H
=
\left\lfloor\frac{36-1}{2}\right\rfloor
=17.
\]
Therefore, $\frac{\ell_{\mathrm{2pt}}}{n_{\mathrm{2pt}}}
\geq
\frac{18}{511}
\approx 0.03523,$
whereas $\frac{\ell_H}{n_H}
=
\frac{17}{512}
\approx 0.03320.$
Hence, $\frac{\ell_{\mathrm{2pt}}}{n_{\mathrm{2pt}}}
>
\frac{\ell_H}{n_H}.$
Thus, this two-point Hermitian code yields a larger relative decoding
radius than the one-point Hermitian code of the same dimension and,
therefore, a larger expected DQI satisfaction fraction for fixed
constraint density in the nonsaturated regime.
\end{example}

We now conclude this subsection by demonstrating with the following table that the alphabet size and length of the code affect the number of qubits required to represent one field element. We note that the extended norm--trace curve \(\mathcal{X}_{q,t,u, u = (q^t-1)/(q-1)}\) is the well known norm--trace curve.

{\scriptsize
\setlength{\tabcolsep}{2.1pt}
\setlength{\LTleft}{\fill}
\setlength{\LTright}{\fill}

\begin{longtable}{c c c c}

\toprule
Code & Alphabet size & Length $n$ & \# qubits per field element \\
\midrule
\endfirsthead

\toprule
Code & Alphabet size & Length $n$ & \# qubits per field element \\
\midrule
\endhead

\midrule
\multicolumn{4}{r}{\emph{Continued on next page}}\\
\midrule
\endfoot

\endlastfoot

Reed--Solomon
& $q$
& $q$
& $\left\lceil \log_2 q \right\rceil
= \left\lceil \log_2 n \right\rceil$
\\

Suzuki $\mathcal S_q$
& $q$
& $q^2$
& $\left\lceil \log_2 q \right\rceil
= \left\lceil \tfrac{1}{2}\log_2 n \right\rceil$
\\

Hermitian $\mathcal X_{q,2,q+1}$
& $q^2$
& $q^3$
& $\left\lceil \log_2 q^2 \right\rceil
= \left\lceil \tfrac{2}{3}\log_2 n \right\rceil$
\\

Extended norm--trace $\mathcal X_{q,t,(q^t-1)/(q-1)}$
& $q^t$
& $q^{2t-1}$
& $\left\lceil \log_2 q^t \right\rceil
= \left\lceil \tfrac{t}{2t-1}\log_2 n \right\rceil$
\\

Extended norm-trace ($t=2$) $\mathcal X_{q,2,u\mid q+1}$
& $q^2$
& $q+u(q^2-q)$
& $\left\lceil \frac{2\log_2q}{\log_2\!\bigl(q+u(q^2-q)\bigr)} \log_2n \right\rceil$
\\

\caption{Comparison of code length, alphabet size, and qubit cost per field element.}
\label{tab:qubit_cost}

\end{longtable}}

\subsection{Comparisons of DQI and Prange's Algorithm}

In this section, we provide some examples  comparing DQI performance with Prange's algorithm. We compare DQI against Prange's algorithm for Suzuki one-point codes over \(\mathbb F_8\), using exact distance data for the dual codes. We define $C^\perp = C_{\mathcal L}(D,\gamma P_\infty)$, so that, $\dim C^\perp = \ell(\gamma P_\infty),
k = \dim C = n - \ell(\gamma P_\infty).$
For the Suzuki curve over \(\mathbb F_8\), we have \(n = 64\). Using the exact dual distances \(d^\perp\) for certain values of $\gamma$ that were given in \cite{duursma_suzuki}, we set $\ell = \left\lfloor \frac{d^\perp - 1}{2} \right\rfloor.$ In the balanced case \(r = q/2 = 4\), we have \(\rho = 1/2\). The expected satisfaction fraction for Prange is
\[
\frac{s_{\mathrm{Prange}}}{n}
= \frac12 + \frac12 \frac{k}{n},
\]
while DQI achieves
\[
\frac{s_{\mathrm{DQI}}}{n}
=
\left(
\sqrt{\frac{\ell}{n}\cdot \frac12}
+
\sqrt{\frac12\left(1-\frac{\ell}{n}\right)}
\right)^2.
\]

Figure~\ref{fig:Suzuki-advantage} shows the comparison. Panel (a) plots the expected satisfaction fraction \( s/n\) for each value of \(\gamma\), while panel (b) shows the advantage
\[
\frac{ s_{\mathrm{DQI}} - s_{\mathrm{Prange}}}{n}.
\]
For all parameter choices shown in Figure \ref{fig:Suzuki-advantage}, DQI has a larger expected satisfaction fraction than Prange's algorithm.
The improvement is governed by the dual distance \(d^\perp\) and decoding radius. This demonstrates that even at small field size, Suzuki codes outperform the Prange baseline in expected satisfaction fraction.

\begin{figure}[h!]
\centering
\begin{tikzpicture}
\begin{groupplot}[
    group style={group size=1 by 2, vertical sep=2cm},
    width=0.4\textwidth,
    height=0.35\textwidth,
    grid=major,
    xlabel style={font=\small},
    ylabel style={font=\small},
    title style={font=\normalsize},
    tick label style={font=\small},
    legend style={font=\small, draw=gray!40, fill=white},
    xtick=data,
    xticklabel style={rotate=45, anchor=east},
]

\nextgroupplot[
    title={(a) Suzuki $q=8$},
    xlabel={Dual parameter $\gamma$},
    ylabel={$s/n$},
    ymin=0.60, ymax=1.01,
    legend pos=north east, x=.5cm,
    symbolic x coords={10,23,27,36,37,41,45,47,49,50,53,54,55,57,58,59,60,61,62,63},
]
\addplot[
    blue, thick, mark=*,
] coordinates {
    (10,0.993859)
    (23,0.963512)
    (27,0.949609)
    (36,0.902325)
    (37,0.902325)
    (41,0.877272)
    (45,0.847634)
    (47,0.830719)
    (49,0.812109)
    (50,0.812109)
    (53,0.768368)
    (54,0.768368)
    (55,0.768368)
    (57,0.711371)
    (58,0.711371)
    (59,0.711371)
    (60,0.711371)
    (61,0.711371)
    (62,0.711371)
    (63,0.711371)
};
\addlegendentry{DQI}

\addplot[
    orange, thick, mark=*,
] coordinates {
    (10,0.976562)
    (23,0.914062)
    (27,0.890625)
    (36,0.820312)
    (37,0.812500)
    (41,0.781250)
    (45,0.750000)
    (47,0.734375)
    (49,0.718750)
    (50,0.710938)
    (53,0.687500)
    (54,0.679688)
    (55,0.671875)
    (57,0.656250)
    (58,0.648438)
    (59,0.640625)
    (60,0.632812)
    (61,0.625000)
    (62,0.617188)
    (63,0.609375)
};
\addlegendentry{Prange}

\nextgroupplot[
    title={(b) Advantage},
    xlabel={Dual parameter $\gamma$},
    ylabel={$s_{\mathrm{DQI}}/n - s_{\mathrm{Prange}}/n$},
    ymin=0, ymax=0.23,
    x=0.5cm,
    symbolic x coords={10,23,27,36,37,41,45,47,49,50,53,54,55,57,58,59,60,61,62,63},
]

\addplot[
    blue, thick, mark=*,
] coordinates {
    (10,0.017296)
    (23,0.049450)
    (27,0.058984)
    (36,0.082012)
    (37,0.089825)
    (41,0.096022)
    (45,0.097634)
    (47,0.096344)
    (49,0.093359)
    (50,0.101172)
    (53,0.080868)
    (54,0.088681)
    (55,0.096493)
    (57,0.055121)
    (58,0.062934)
    (59,0.070746)
    (60,0.078559)
    (61,0.086371)
    (62,0.094184)
    (63,0.101996)
};

\end{groupplot}
\end{tikzpicture}
\caption{Comparison of DQI and Prange for Suzuki one-point codes over $\mathbb F_8$ in the balanced case $r=q/2=4$. Panel (a) plots the expected satisfaction fraction $s/n$, while panel (b) plots the advantage $s_{\mathrm{DQI}}/n-s_{\mathrm{Prange}}/n$.}
\label{fig:Suzuki-advantage}
\end{figure}

\section{Conclusion} \label{sec:conclusion}
In this paper, we investigated DQI performance with algebraic geometry codes by introducing COFI. The results in this paper demonstrate that families of AG codes can yield larger expected DQI satisfaction fractions than comparable Reed-Solomon constructions.  Also, Suzuki codes require fewer qubits per field element than Reed--Solomon codes of the same length. Our results show that, within the DQI framework, codes from extended norm--trace curves and Suzuki curves as well as a family of two-point Hermitian codes outperform one-point Hermitian codes by achieving larger satisfaction fractions under suitable assumptions on the divisor degree. Our work demonstrates that the DQI framework extends beyond Reed--Solomon and one-point Hermitian codes to broader families of algebraic geometry codes, and shows how the algebraic structure of these code families can be leveraged to improve DQI performance guarantees. 

An open direction is to explore DQI performance across broader families of codes. In particular, it may not be necessary to restrict to algebraic geometry codes, provided the chosen family fits within the DQI framework. We therefore aim to extend these comparisons to other code families and investigate whether DQI can be applied beyond the AG setting. Another interesting direction would be to understand how list-decoding algorithms for algebraic geometry codes can be incorporated into the DQI framework and, more generally, whether DQI can effectively exploit list-decoding methods beyond the unique-decoding regime. Such an extension could potentially allow DQI to operate with larger decoding radii and thereby improve the resulting optimization guarantees. Another natural direction is to investigate DQI using additional families of classical codes with efficient decoding algorithms, with the goal of identifying further code families whose structural and decoding properties yield improved quantum optimization performance.

\bibliographystyle{plain}
\bibliography{bib2}

\end{document}